\def\showauthornotes{1}
\def\showdraftbox{0}
\newenvironment{fminipage}%
  {\begin{Sbox}\begin{minipage}}%
  {\end{minipage}\end{Sbox}\fbox{\TheSbox}}
\newtheorem{theorem}{Theorem}[section]
\newtheorem{lemma}[theorem]{Lemma}
\newtheorem{definition}[theorem]{Definition}
\newtheorem{claim}[theorem]{Claim}
\newcommand{\diag}[1]{{\bf Diag}\left({#1}\right)}
\newcommand{\nfrac}[2]{\nicefrac{#1}{#2}} \def\abs#1{\left| #1
  \right|}
\renewcommand{\norm}[1]{\ensuremath{\left\lVert #1
    \right\rVert}}
\newcommand\rea{\mathbb R}
\newcommand{\marginlabel}[1]%
{\mbox{}\marginpar{\it{\raggedleft\hspace{0pt}#1}}}
\newcommand\poly{{\textrm{poly}}}  
\definecolor{Mygray}{gray}{0.8}
\let\csname ifcommentflag\expandafter\endcsname
\newcommand{\todo}[1]{\colorbox{Mygray}{\color{red}\parbox{\textwidth}{#1}}}
\newcommand{\todo}[1]{}
\newcommand{\Authornote}[2]{{\sf\small\color{red}{[#1: #2]}}}
\newcommand{\Authoredit}[2]{{\sf\small\color{red}{[#1]}\color{blue}{#2}}}
\newcommand{\Authorcomment}[2]{{\sf \small\color{gray}{[#1: #2]}}}
\newcommand{\Authorfnote}[2]{\footnote{\color{red}{#1: #2}}}
\newcommand{\Authorfixme}[1]{\Authornote{#1}{\textbf{??}}}
\newcommand{\Authormarginmark}[1]{\marginpar{\textcolor{red}{\fbox{
#1:!}}}}
\newcommand{\Authornote}[2]{}
\newcommand{\Authoredit}[2]{}
\newcommand{\Authorcomment}[2]{}
\newcommand{\Authorfnote}[2]{}
\newcommand{\Authorfixme}[1]{}
\newcommand{\Authormarginmark}[1]{}
\newlength{\pgmtab}  
\def\qedsketch{\ifmmode\Box\else{\unskip\nobreak\hfil
\penalty50\hskip1em\null\nobreak\hfil$\Box$
\parfillskip=0pt\finalhyphendemerits=0\endgraf}\fi}
\newlength{\tpush}
\newcommand{\handout}[5]{
   \noindent
   \begin{center}
   \framebox{ \vbox{ \hbox to \textwidth { {\bf \coursenum\ :\  \coursename} \hfill #5 }
       \vspace{3mm}
       \hbox to \textwidth { {\Large \hfill #2  \hfill} }
       \vspace{1mm}
       \hbox to \textwidth { {\it #3 \hfill #4} }
     }
   }
   \end{center}
   \vspace*{4mm}
   \newcommand{\lecturenum}{#1}
   \addcontentsline{toc}{chapter}{Lecture #1 -- #2}
}
\newenvironment{tight_enumerate}{
  \begin{enumerate}[leftmargin=2em]
    \setlength{\labelindent}{-1em}
 \setlength{\itemsep}{2pt}
 \setlength{\parskip}{1pt}
}{\end{enumerate}}
\newenvironment{tight_itemize}{
\begin{itemize}[leftmargin=2em]
 \setlength{\itemsep}{2pt}
 \setlength{\parskip}{1pt}
}{\end{itemize}}
\newcommand\bb{\boldsymbol{\mathit{b}}}
\newcommand\dd{\boldsymbol{\mathit{d}}}
\renewcommand\gg{\boldsymbol{\mathit{g}}}
\newcommand\rr{\boldsymbol{\mathit{r}}}
\newcommand\vv{\boldsymbol{\mathit{v}}}
\newcommand\xx{\boldsymbol{\mathit{x}}}
\renewcommand\AA{\boldsymbol{\mathit{A}}}
\newcommand\BB{\boldsymbol{\mathit{B}}}
\newcommand\CC{\boldsymbol{\mathit{C}}}
\newcommand\II{\boldsymbol{\mathit{I}}}
\newcommand\RR{\boldsymbol{\mathit{R}}}
\newcommand\WW{\boldsymbol{\mathit{W}}}
\newcommand\xxtilde{\boldsymbol{\widetilde{{x}}}}
\newcommand\Dtil{{\widetilde{{\Delta}}}}
\newcommand\Dopt{{{{\Delta^{\star}}}}}
\newcommand\residual{{{{\gamma}}}}
\newcommand{\Alg}{$p$-\textsc{IRLS}}
\newcommand{\eps}{\varepsilon}
\renewcommand{\epsilon}{\eps}
\newcommand{\etal}{\emph{et al}}
\newcommand{\opt}{\textsc{Opt}}
\renewcommand{\diag}[1]{\ensuremath{\text{diag}\left(#1\right)}}
\newcommand{\sushant}[1]{{\bf \color{blue} Sushant: #1}}
\newcommand{\deeksha}[1]{{\bf \color{blue} Deeksha: #1}}
 \renewcommand{\sushant}[1]{}
 \renewcommand{\deeksha}[1]{}
\title{Fast, Provably convergent IRLS Algorithm for $p$-norm Linear
  Regression \thanks{Code for this work is available at
    \url{https://github.com/utoronto-theory/pIRLS}.} }
\author{%
  Deeksha Adil\\
  Department of Computer Science\\
  University of Toronto\\
  \texttt{deeksha@cs.toronto.edu} \\
  \And
  Richard Peng\\
  School of Computer Science\\
  Georgia Institute of Technology\\
  \texttt{rpeng@cc.gatech.edu}\\
  \And
   Sushant Sachdeva \\
 Department of Computer Science\\
  University of Toronto\\
  \texttt{sachdeva@cs.toronto.edu} 
}
\begin{document}

\maketitle

\begin{abstract}
  Linear regression in $\ell_p$-norm
  is a canonical optimization problem that arises in several
  applications, including sparse recovery, semi-supervised
  learning, and signal processing. 
  Generic convex optimization algorithms for solving
  $\ell_p$-regression are slow in practice. Iteratively Reweighted
  Least Squares (IRLS) is an easy to implement family of algorithms
  for solving these problems that has been studied for over 50 years.
  However, these algorithms often diverge for $p > 3$, and since the
  work of Osborne (1985), it has been an open problem whether there is
  an IRLS algorithm that is guaranteed to converge rapidly for
  $p > 3.$
  We propose {\Alg}, the first IRLS algorithm that provably converges
  geometrically for any $p \in [2,\infty)$. Our algorithm is simple to
  implement and is guaranteed to find a high accuracy solution in a
  sub-linear number of iterations.
  Our experiments demonstrate that it performs even better than our
  theoretical bounds, beats the standard Matlab/CVX implementation for
  solving these problems by 10--50x, and is the fastest among
  available implementations in the high-accuracy regime. 
  %
%
%
%
\end{abstract}



\section{Introduction}

We consider the problem of $\ell_p$-norm linear regression
(henceforth referred to as $\ell_p$-regression),
\begin{equation}
  \label{eq:regression}
  \arg \min_{\xx \in \rea^n} \norm{\AA\xx-\bb}_p,
\end{equation}
where $\AA \in \rea^{m \times n}, \bb \in \rea^{m}$ are given and
$\norm{\vv}_{p} = \left( \sum_i |\vv_i|^p \right)^{\nfrac{1}{p}}$
denotes the $\ell_p$-norm.  This problem generalizes linear regression
and appears in several applications including sparse
recovery~\cite{CandesT05}, low rank matrix
approximation~\cite{chierichetti17a}, and graph based semi-supervised
learning~\cite{AlamgirMU11}.

An important application of $\ell_p$-regression with $p \ge 2$ is
graph based semi-supervised learning (SSL). Regularization using the
standard graph Laplacian (also called a $2$-Laplacian) was introduced
in the seminal paper of Zhu, Gharamani, and Lafferty~\cite{ZhuGL03},
and is a popular approach for graph based SSL, see
e.g.~\cite{ZhouBLWS04, BelkinMN04, ChapelleSZ09, Zhu05}. The
2-Laplacian regularization suffers from degeneracy in the limit of
small amounts of labeled data~\cite{NadlerSZ09}. Several works have
since suggested using the $p$-Laplacian instead~\cite{AlamgirMU11,
  BridleZ13, ZhouB11} with large $p,$ and have established its
consistency and effectiveness for graph based SSL with small amounts
of data~\cite{pmlr-v49-elalaoui16, Calder17, CalderFL19, DejanT17,
  KyngRSS15}.  Recently, $p$-Laplacians have also been used for data
clustering and learning problems \cite{ElmoatazTT15, ElmoatazDT17,
  HafieneFE18}. Minimizing the $p$-Laplacian can be easily seen as an
$\ell_p$-regression problem.

Though $\ell_p$-regression is a convex programming problem, it is very
challenging to solve in practice. General convex programming methods
such as conic programming using interior-point methods (like those
implemented in CVX) are very slow in practice. First order methods do
not perform well for these problems with $p > 2$ since the gradient
vanishes rapidly close to the optimum.

For applications such graph based SSL with $p$-Laplacians, it is
important that we are able to compute a solution $\xx$ that
approximates the optimal solution $\xx^\star$ coordinate-wise rather
than just achieving an approximately optimal objective value, since
these coordinates determine the labels for the vertices. For such
applications, we seek a $(1+\eps)$-approximate solution, an $\xx$ such
that its objective value, $\norm{\AA\xx-\bb}_p^p$ is at most
$(1+\eps)$ times the optimal value $\norm{\AA\xx^\star-\bb}_p^p,$ for
some very small $\eps$ ($10^{-8}$ or so) in order achieve a reasonable
coordinate-wise approximation. Hence, it is very important for the
dependence on $\eps$ be $\log \nfrac{1}{\eps}$ rather than
$\poly(\nfrac{1}{\eps}).$ A $\log \nfrac{1}{\eps}$ guarantee implies a
coordinate-wise convergence guarantee with essentially no loss in the
asymptotic running time. Please see the supplementary material for the
derivation and experimental evaluation of the coordinate-wise
convergence guarantees.

\paragraph{IRLS Algorithms.}
A family of algorithms for solving the $\ell_p$-regression problem are
the IRLS (Iterated Reweighted Least Squares) algorithms. IRLS
algorithms have been discovered multiple times independently and have
been studied extensively for over 50 years e.g.~\cite{Lawson61,
  Rice64, Osborne85, GorodnitskyR97} (see \cite{Burrus12} for a
detailed survey).
%
The main
step in an IRLS algorithm is to solve a weighted least squares
($\ell_2$-regression) problem to compute the next iterate,
\begin{equation}
  \label{eq:IRLS}
\xx^{(t+1)} = \arg\min_{\xx} (\AA\xx-\bb)^{\top}\RR^{(t)}
  (\AA\xx-\bb),
\end{equation}
starting from any initial solution $\xx^{(0)}$ (usually the least
squares solution corresponding to $\RR = \II$). Each iteration can be
implemented by solving a linear system
$ \xx^{(t+1)} \leftarrow
(\AA^{\top}\RR^{(t)}\AA)^{-1}\AA^{\top}\RR^{(t)}\bb.  $ Picking
$\RR^{(t)} = \diag{|\AA\xx^{(t)}-\bb|^{p-2}},$ gives us an IRLS
algorithm where the only fixed point is the minimizer of the
regression problem~\eqref{eq:regression} (which is unique for
$p \in (1,\infty)$).




The basic version of the above IRLS algorithm converges reliably in
practice for $p \in (1.5,3),$ and diverges often even for moderate $p$
(say $p \ge 3.5$~\cite[pg 12]{CalderFL19}). Osborne~\cite{Osborne85}
proved that the above IRLS algorithm converges in the limit for
$p \in [1,3).$ Karlovitz~\cite{Karlovitz70} proved a similar result
for an IRLS algorithm with a line search for even $p>2$. However, both
these results only prove convergence in the limit without any
quantitative bounds, and assume that you start close enough to the
solution. The question of whether a suitable IRLS algorithm converges
geometrically to the optimal solution for \eqref{eq:regression} in a few
iterations has been open for over three decades.


\paragraph{Our Contributions.}
We present {\Alg}, the first IRLS algorithm that provably converges
geometrically to the optimal solution for $\ell_p$-regression for all
$p \in [2, \infty).$ Our algorithm is very similar to the standard
IRLS algorithm for $\ell_p$ regression, and given an $\eps > 0,$
returns a feasible solution $\xx$ for~\eqref{eq:regression} in
$O_p(m^{\frac{p-2}{2(p-1)}} \log \frac{m}{\eps} ) \le O_p(\sqrt{m}
\log \frac{m}{\eps})$ iterations (Theorem~\ref{thm:MainTheorem}). Here
$m$ is the number of rows in $\AA.$ We emphasize that the dependence
on $\eps$ is $\log \frac{1}{\eps}$ rather than
$\poly(\frac{1}{\eps}).$



Our algorithm {\Alg} is very simple to implement, and our experiments
demonstrate that it is much faster than the available implementations
for $p \in (2,\infty)$ in the high accuracy regime. We study its
performance on random dense instances of $\ell_p$-regression, and low
dimensional nearest neighbour graphs for $p$-Laplacian SSL. Our Matlab
implementation on a standard desktop machine runs in at most 2--2.5s
(60--80 iterations) on matrices for size $1000\times 850,$ or graphs
with $1000$ nodes and around 5000 edges, even with $p=50$ and
$\eps=10^{-8}.$ Our algorithm is at least 10--50x faster than the
standard Matlab/CVX solver based on Interior point methods \cite{cvx, gb08}, while
finding a better solution. We also converge much faster than
IRLS-homotopy based algorithms~\cite{CalderFL19} that are not even
guaranteed to converge to a good solution. For larger $p,$ say
$p > 20,$ this difference is even more dramatic, with {\Alg} obtaining
solutions with at least 4 orders of magnitude smaller error with the
same number of iterations. Our experiments also indicate that {\Alg}
scales much better than as indicated by our theoretical bounds, with
the iteration count almost unchanged with problem size, and growing
very slowly (at most linearly) with $p.$


\subsection{Related Works and Comparison}
IRLS algorithms have been used widely for various problems due to
their exceptional simplicity and ease of implementation, including
compressive sensing~\cite{ChartrandY08}, sparse signal
reconstruction~\cite{GorodnitskyR97}, and Chebyshev approximation in
FIR filter design \cite{BarretoB94}.  There have been various attempts
at analyzing variants of IRLS algorithm for $\ell_p$-norm
minimization.  We point the reader to the survey by
Burrus~\cite{Burrus12} for numerous pointers and a thorough history.

The works of Osborne~\cite{Osborne85} and Karlovitz~\cite{Karlovitz70}
mentioned above only prove convergence in the limit without
quantitative bounds and under assumptions on $p$ and that we start close
enough. Several works show that it is similar to Newton's method
(e.g.~\cite{Kahng72, BurrusBS94}), or that adaptive step sizes help
(e.g.~\cite{BurrusV99, BurrusV12}) but do not prove any guarantees.


A few notable works prove convergence guarantees for IRLS algorithms
for sparse recovery (even $p < 1$ in some
cases)~\cite{DaubechiesDFG08, DaubechiesDFG10, LN18}, and for low-rank
matrix recovery~\cite{FornasierRW11}.
Quantitative convergence bounds for IRLS algorithms for $\ell_1$ are
given by Straszak and Vishnoi~\cite{StraszakV16Flow, StraszakV16,
  StraszakV16IRLS}, inspired by slime-mold dynamics. Ene and Vladu
give IRLS algorithms for $\ell_1$ and $\ell_\infty$~\cite{EneV19}.
However, both these works have $\poly(\nfrac{1}{\eps})$ dependence in
the number of iterations, with the best result by~\cite{EneV19}
having a total iteration count roughly $m^{\nfrac{1}{3}} \eps^{-\nfrac{2}{3}}$.




 The most relevant theoretical results for $\ell_p$-norm minimization are
Interior point methods~\cite{NesterovN94}, the homotopy method of
Bubeck~\etal~\cite{BubeckCLL18}, and the iterative-refinement method
of Adil~\etal.~\cite{AdilKPS19}.  The convergence bounds we prove on
the number of iterations required by {\Alg}
(roughly~$m^{\frac{p-2}{2p-2}}$) has a better dependence on $m$ than
Interior Point methods (roughly $m^{\nfrac{1}{2}}$), but marginally
worse than the dependence in the work of Bubeck
\etal.~\cite{BubeckCLL18} (roughly $m^{\frac{p-2}{2p}}$) and
Adil~\etal.~\cite{AdilKPS19} (roughly $m^{\frac{p-2}{2p +
    (p-2)}}$). Note that we are comparing the dominant polynomial
terms, and ignoring the smaller $\poly(p\log \nfrac{m}{\eps})$
factors. A follow-up work in this line by a subset of the
authors~\cite{AdilS19} focuses on the large $p$ case, achieving a
similar running time to~\cite{AdilKPS19}, but with linear dependence
on $p.$
Also related, but not directly comparable are the works of
Bullins~\cite{Bullins18} (restricted to $p=4$) and the work of
Maddison~\etal.~\cite{Maddison18} (first order method with a
dependence on the condition number, which could be large).

More importantly, in contrast with comparable second order
methods~\cite{BubeckCLL18, AdilKPS19}, our algorithm is far simpler to
implement, and has a \emph{locally greedy} structure that allows for
greedily optimizing the objective using a line search, resulting in
much better performance in practice than that guaranteed by our
theoretical bounds. Unfortunately, there are also no available
implementations for any of the above discussed methods (other than
interior point methods) in order to make a comparison.

Another line of heuristic algorithms combines IRLS algorithms with a
homotopy based approach
(e.g.~\cite{Kahng72}. See~\cite{Burrus12}). These methods start from a
solution for $p=2,$ and slowly increase $p$ multiplicatively, using an
IRLS algorithm for each phase and the previous solution as a starting
point. These algorithms perform better in practice than usual IRLS
algorithms. However, to the best of our knowledge, they are not
guaranteed to converge, and no bounds on their performance are
known. Rios~\cite{Flores19} provides an efficient implementation of
such a method based on the work of Rios~\etal.~\cite{CalderFL19},
along with detailed experiments. Our experiments show that our
algorithm converges much faster
than 
the implementation from Rios 
(see Section~\ref{sec:experiments}).

\section{Preliminaries}


We first define some terms that we will use in the formal analysis of
our algorithm. For our analysis we use a more general form of the
$\ell_p$-regression problem,
\begin{equation}
\label{eq:ConstrainedReg}
\arg \min_{\xx:\CC\xx = \dd} \norm{\AA\xx-\bb}_p.
\end{equation}
Setting  $\CC$ and $\dd$ to be empty recovers the standard
$\ell_p$-regression problem.
%
\begin{definition}[Residual Problem]
\label{def:residual}
The residual problem of \eqref{eq:ConstrainedReg} at $\xx$ is defined as,
\begin{align*}
\max_{\Delta: \CC\Delta = 0}  \quad & \gg^{\top}\AA\Delta - 2p^2\Delta^{\top}\AA^{\top}\RR\AA\Delta - p^p \|\AA\Delta\|_p^p.
\end{align*}
Here $\RR = \diag{|\AA\xx-\bb|^{p-2}}$ and $\gg = p\RR(\AA\xx-\bb)$ is the gradient of the objective at $\xx$. Define $\residual(\Delta)$ to denote the objective of the residual problem evaluated at $\Delta$.
\end{definition}

\begin{definition}[Approximation to the Residual Problem] Let
  $\kappa \geq 1$ and $\Dopt$ be the optimum of the residual
  problem. A $\kappa$-approximate solution to the residual problem is
  $\Dtil$ such that  $\CC\Dtil = 0,$ and
 $\residual(\Dtil)\geq \frac{1}{\kappa} \residual(\Dopt).$
\end{definition}


\section{Algorithm and Analysis}
\label{sec:Algo}

\begin{algorithm}
\caption{{\Alg} Algorithm}
\label{alg:MainAlgo}
 \begin{algorithmic}[1]
 \Procedure{{\Alg}}{$\AA, \bb, \epsilon, \CC,\dd$}
 \State $\xx \leftarrow  \arg\min_{\CC\xx = \dd} \norm{\AA\xx-\bb}^2_2.$
 \State $i  \leftarrow \|\AA\xx - \bb\|_p^p/16p$
 \While{$\frac{\epsilon}{16p(1+\epsilon)} \norm{\AA\xx-\bb}_p^p < i$}
 \State $\RR \leftarrow |\AA\xx -\bb |^{p-2}$ \label{alg:line:DefineR}
 \State $\gg = p\RR(\AA\xx-\bb)$ \label{alg:line:DefineG}
 \State $s \leftarrow \frac{1}{2} i^{(p-2)/p} m^{-(p-2)/p}$ \label{alg:line:DefineS}
 \State $\Dtil \leftarrow \arg \min_{\gg^{\top}\AA\Delta = i/2, \CC\Delta = 0}  \quad \Delta^{\top}\AA^{\top}(\RR + s\II)\AA\Delta$ \label{alg:line:l2}
  \State $\alpha \leftarrow$ {\sc LineSearch}$(\AA,\bb,\xx^{(t)},\Dtil)$  \Comment{$\alpha = \arg \min_{\alpha}\|\AA(\xx-\alpha\Dtil)-\bb\|_p^p$}\label{alg:line:DefineScale}
 \State $\xx^{(t+1)} \leftarrow \xx^{(t)} - \alpha\Dtil$
  \If{ {\sc InsufficientProgressCheck($\AA,\RR+s\II,\Dtil, i $)}}
 \ $i \leftarrow i/2$
 \EndIf
\EndWhile
\State \Return $\xx$
 \EndProcedure 
 \end{algorithmic}
\end{algorithm}

\begin{algorithm}
\caption{Check Progress}
\label{alg:CheckProgress}
\begin{algorithmic}[1]
\Procedure{InsufficientProgressCheck}{$\AA, \RR, \Delta,i$}
\State $\lambda \leftarrow 16p$
\State $k \leftarrow \frac{p^p\norm{\AA\Delta}_p^p}{2p^2\Delta^{\top}\AA^{\top}\RR \AA \Delta}$
\State $\alpha_0 \leftarrow \min\left\{\frac{1}{16\lambda}, \frac{1}{(16\lambda k)^{1/(p-1)}}\right\}$\label{alg:line:DefineAlpha}
\If {$\residual(\alpha_0 \cdot \Dtil) < \frac{\alpha_0}{4} i$ or
  $\Delta^{\top}\AA^{\top}(\RR+s\II) \AA \Delta > \lambda i/p^2$} \Return true
\Else \ \Return false
\EndIf
\EndProcedure 
\end{algorithmic}
\end{algorithm}

Our algorithm {\Alg}, described in Algorithm~\eqref{alg:MainAlgo}, is
the standard IRLS algorithm (equation~\ref{eq:IRLS}) with few key
modifications.  The first difference is that at each iteration $t$, we
add a small systematic padding $s^{(t)} \II$ to the weights
$\RR^{(t)}.$ The second difference is that the next iterate
$\xx^{(t+1)}$ is calculated by performing a line search along the line
joining the current iterate $\xx^{(t)}$ and the standard IRLS iterate
$\xxtilde^{(t+1)}$ at iteration $t+1$ (with the modified weights)
\footnote{Note that {\Alg} has been written in a slightly different
  but equivalent formulation, where it solves for
  $\Dtil = \xx^{(t)} - \xxtilde^{(t+1)}.$}.  Both these modifications
have been tried in practice, but primarily from practical
justifications: padding the weights avoids ill-conditioned matrices,
and line-search can only help us converge faster and improves
stability~\cite{Karlovitz70, BurrusV99, BurrusV99}. Our key
contribution is to show that these modifications together allow us to
provably make $\Omega_p(m^{-\frac{p-2}{2(p-1)}})$ progress towards the
optimum, resulting in a final iteration count of
$O_p(m^{\frac{p-2}{2(p-1)}} \log \frac{m}{\eps}).$ Finally, at every
iteration we check if the objective value decreases sufficiently, and
this allows us to adjust $s^{(t)}$ appropriately.  We emphasize here
that our algorithm always converges. 
We prove the following theorem:
%
%
%
\begin{theorem}
  \label{thm:MainTheorem}
  Given any
  $\AA \in \rea^{m \times n}, \bb \in \rea^{m}, \epsilon>0, p \geq 2$
  and
  $\xx^{\star} = \arg \min_{\xx: \CC\xx=\dd}
  \norm{\AA\xx-\bb}_p^p$. Algorithm \ref{alg:MainAlgo} returns $\xx$
  such that
  $\norm{\AA\xx- \bb}_p^p \leq (1+\epsilon) \norm{\AA\xx^{\star}-
    \bb}_p^p$ and $\CC\xx = \dd$, in at most
  $O\left(p^{3.5}m^{\frac{p-2}{2(p-1)}}\log\left(\frac{m}{\epsilon}\right)\right)$
  iterations.
\end{theorem}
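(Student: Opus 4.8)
The plan is to cast the whole algorithm as an instance of iterative refinement for the objective $f(\xx) = \norm{\AA\xx-\bb}_p^p$, with the residual problem of Definition~\ref{def:residual} two-sidedly controlling the per-step change of $f$. First I would establish the two scalar inequalities for $|\cdot|^p$ with $p\ge 2$: an upper (smoothness) bound of the shape $|a-\delta|^p \le |a|^p - p|a|^{p-2}a\,\delta + 2p^2|a|^{p-2}\delta^2 + p^p|\delta|^p$ together with a matching lower (convexity) bound. Summing the upper bound over the coordinates of $\AA\Delta$ and recalling $\gg = p\RR(\AA\xx-\bb)$ gives $f(\xx) - f(\xx-\Delta) \ge \residual(\Delta)$ for every feasible $\Delta$, so the line search in line~\ref{alg:line:DefineScale} guarantees a decrease of at least $\residual(\alpha_0\Dtil)$ per step. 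Summing the lower bound and substituting a suitably rescaled copy of $\xx-\xx^\star$ gives $\residual(\Dopt) \ge \Omega_p\!\big(f(\xx) - \norm{\AA\xx^\star-\bb}_p^p\big)$. Together these reduce the theorem to showing that each iteration either approximately solves the residual problem to value $\Omega_p(\text{gap})$ or correctly shrinks the budget $i$.

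Next I would analyze the padded least-squares subproblem in line~\ref{alg:line:l2} as an approximate residual solver. It replaces the penalty $p^p\norm{\AA\Delta}_p^p$ by the quadratic proxy $s\,\Delta^\top\AA^\top\AA\Delta$ and fixes the linear term $\gg^\top\AA\Delta$ to $i/2$. Using $\norm{\yy}_p \le \norm{\yy}_2 \le m^{\frac12-\frac1p}\norm{\yy}_p$ and the calibrated choice $s = \tfrac12 i^{(p-2)/p}m^{-(p-2)/p}$, which is tuned to the scale $\norm{\AA\Delta}_p^p \approx i$, the proxy approximates $\norm{\AA\Delta}_p^p$ to within a factor $m^{(p-2)/p}$. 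The decisive step is then the ray minimization: scaling the $\ell_2$ optimum by $\alpha_0 = \min\{(16\lambda)^{-1}, (16\lambda k)^{-1/(p-1)}\}$ from line~\ref{alg:line:DefineAlpha} minimizes the quadratic-plus-$\ell_p$ penalty along the chosen direction, and the $1/(p-1)$ exponent converts the factor-$m^{(p-2)/p}$ proxy error into only a factor-$m^{(p-2)/(2(p-1))}$ loss in the residual value. This is exactly where the improved exponent $\tfrac{p-2}{2(p-1)}$ (rather than the naive $\tfrac{p-2}{p}$) comes from, and I expect this balancing to be the main obstacle.

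With the subproblem understood, I would turn to the bookkeeping around the budget $i$ and the halving in Algorithm~\ref{alg:CheckProgress}. The invariant to maintain is that $i$ stays within a constant factor above the achievable residual value, hence above a $\Theta_p(1)$ fraction of the gap. While this holds, \textsc{InsufficientProgressCheck} passes and the bounds above give a decrease $f(\xx^{(t)}) - f(\xx^{(t+1)}) \ge \tfrac{\alpha_0}{4}i = \Omega_p\!\big(m^{-(p-2)/(2(p-1))}\big)\cdot i$. The check fails — triggering $i \leftarrow i/2$ — precisely when the quadratic form $\Delta^\top\AA^\top(\RR+s\II)\AA\Delta$ is too large or the attained $\residual(\alpha_0\Dtil)$ falls below $\tfrac{\alpha_0}{4}i$, i.e.\ when $i$ over-estimates the achievable progress; I would show that this halving never pushes $i$ more than a constant factor below the correct scale, so the invariant is restored.

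Finally I would count. Since the gap is always $O_p(i)$ and each good step multiplies it by $\big(1 - \Omega_p(m^{-(p-2)/(2(p-1))})\big)$, at most $O_p\!\big(m^{(p-2)/(2(p-1))}\log(\tfrac1\epsilon)\big)$ good steps bring $f$ below $(1+\epsilon)\norm{\AA\xx^\star-\bb}_p^p$, which is the stopping condition of the while loop. The number of halving steps is $O(\log(\tfrac{m}{\epsilon}))$, because $i$ decreases geometrically from its initial value $\norm{\AA\xx^{(0)}-\bb}_p^p/16p$ down to the stopping threshold and never increases. Adding the two contributions and carefully accumulating the $\poly(p)$ factors from the scalar inequalities, the constants $2p^2$, $p^p$, $\lambda = 16p$, and the $\alpha_0$ exponent, yields the stated bound $O\!\big(p^{3.5}\, m^{(p-2)/(2(p-1))}\log(\tfrac{m}{\epsilon})\big)$.
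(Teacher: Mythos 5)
Your proposal is correct and follows essentially the same route as the paper's proof: the two-sided scalar bounds relating the residual problem to the change in objective (the paper's Lemmas \ref{lem:precondition} and \ref{lem:RelateResidualOpt}), the padded least-squares solve analyzed as an $O\left(p^{1.5}m^{\frac{p-2}{2(p-1)}}\right)$-approximate residual solver via the bound on $k$ and the choice of $\alpha_0$ (Lemmas \ref{lem:Approximation} and \ref{lem:Boundk}), the invariant that $i$ always stays above the gap scale so that halvings are safe and termination certifies a $(1+\epsilon)$-approximation (Lemmas \ref{lem:Invariant}, \ref{lem:CheckProgress}, \ref{lem:Termination}), and the final count of good steps plus halving steps. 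The only slips are cosmetic and do not affect the stated bound: the quantity whose $(p-1)$-th root produces the $m^{\frac{p-2}{2(p-1)}}$ factor is the bound $k \le (32pm)^{(p-2)/2}$ (which combines the padding error $m^{(p-2)/p}$ with the $\ell_2$-to-$\ell_p$ norm bound raised to the power $\frac{p-2}{2}$), not $m^{(p-2)/p}$ alone, and the number of halvings is $O\left(p\log\frac{m}{\epsilon}\right)$ rather than $O\left(\log\frac{m}{\epsilon}\right)$, both of which are absorbed into the final $O\left(p^{3.5}m^{\frac{p-2}{2(p-1)}}\log\frac{m}{\epsilon}\right)$ count.
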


The approximation guarantee on the objective value can be translated to a guarantee on coordinate wise convergence. For details on this refer to the supplementary material.

\subsection{Convergence Analysis}
The analysis, at a high level, is based on iterative refinement
techniques for $\ell_p$-norms developed in the work of
Adil~\etal~\cite{AdilKPS19} and Kyng~\etal~\cite{KyngPSW19}.  These
techniques allow us to use a crude $\kappa$-approximate solver
for the {\it residual} problem (Definition~\ref{def:residual})
$O_p(\kappa \log \frac{m}{\eps})$ number of times to obtain a
$(1+\eps)$ approximate solution for the $\ell_p$-regression problem
(Lemma \ref{lem:IterativeRefinement}). 

In our algorithm, if we had solved the standard weighted $\ell_2$ problem instead, $\kappa$ would be unbounded. The padding added to the weights allow us to prove that the solution to weighted $\ell_2$ problem gives a bounded approximation to the residual problem provided we have the correct padding, or in other words correct value of $i$ (Lemma \ref{lem:Approximation}).
We will show that the number of iterations where we are adjusting the
value of $i$ are small. Finally, Lemma \ref{lem:Termination} shows that when the algorithm
terminates, we have an $\eps$-approximate solution to our main
problem. The remaining lemma of this section, Lemma
\ref{lem:Invariant} gives the loop invariant which is used at several
places in the proof of Theorem \ref{thm:MainTheorem}. Due to space
constraints, we only state the main lemmas here and defer the proofs
to the supplementary material.

We begin with the lemma that talks about our overall iterative
refinement scheme. The iterative refinement scheme in \cite{AdilKPS19} and \cite{KyngPSW19} has an exponential
dependence on $p$. We improve this dependence to a small polynomial in $p$.
\begin{restatable}{lemma}{IterativeRefinement}(Iterative Refinement). \label{lem:IterativeRefinement}
Let $p \geq 2$, and $\kappa \geq 1$. Starting from $\xx^{(0)} = \arg\min_{\CC\xx =\dd} \norm{\AA\xx-\bb}_2^2$,  and iterating as, $\xx^{(t+1)} = \xx^{(t)} - \Delta$, where $\Delta$ is a $\kappa$-approximate solution to the residual problem (Definition \ref{def:residual}), we get an $\epsilon$-approximate solution to \eqref{eq:ConstrainedReg} in at most $O\left(p^2 \kappa \log \left(\frac{m}{\epsilon} \right)\right)$ calls to a $\kappa$-approximate solver for the residual problem.
\end{restatable}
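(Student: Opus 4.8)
The plan is to pair a one-step progress guarantee (coming from the upper half of a two-sided $\ell_p$ Taylor estimate) with a lower bound showing that the optimal residual objective controls the current optimality gap, and then iterate. Throughout I write $f(\xx) = \norm{\AA\xx-\bb}_p^p$, $\rr = \AA\xx-\bb$, and for a feasible direction $\Delta$ (i.e.\ $\CC\Delta = 0$) I let
\[
D(\Delta) \;=\; f(\xx-\Delta) - f(\xx) + \gg^\top\AA\Delta \;=\; \sum_i\Big(\abs{\rr_i-(\AA\Delta)_i}^p - \abs{\rr_i}^p + p\abs{\rr_i}^{p-2}\rr_i(\AA\Delta)_i\Big)
\]
denote the Bregman divergence of $f$. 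The starting ingredient is the scalar two-sided estimate for $p\ge 2$ (applied with $a=\rr_i$, $\delta=(\AA\Delta)_i$): there are constants, with the lower one as small as $2^{-O(p)}$, such that each summand lies between $c_p\big(\abs{\rr_i}^{p-2}(\AA\Delta)_i^2 + \abs{(\AA\Delta)_i}^p\big)$ and $2p^2\abs{\rr_i}^{p-2}(\AA\Delta)_i^2 + p^p\abs{(\AA\Delta)_i}^p$. Summing and writing $Q=\Delta^\top\AA^\top\RR\AA\Delta$ and $N=\norm{\AA\Delta}_p^p$, this gives $c_p(Q+N)\le D(\Delta)\le 2p^2 Q + p^p N$.

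The upper half rearranges to the progress bound $f(\xx-\Delta)\le f(\xx) - \residual(\Delta)$ for every feasible $\Delta$, since $\residual(\Delta)=\gg^\top\AA\Delta - 2p^2 Q - p^p N$. Hence a $\kappa$-approximate residual solution $\Dtil$ satisfies $f(\xx-\Dtil)\le f(\xx) - \tfrac1\kappa\residual(\Dopt)$, and it remains to lower bound $\residual(\Dopt)$ by the gap $\eta = f(\xx)-f(\xx^\star)$.

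For that I would test the residual objective on a scaled copy of $\Delta^\star = \xx-\xx^\star$, which is feasible because $\CC\Delta^\star = \dd-\dd = 0$. Since $D(\Delta^\star)\ge 0$ and $f(\xx-\Delta^\star)=f(\xx^\star)$, the definition of $D$ gives $G := \gg^\top\AA\Delta^\star \ge \eta$. The crucial step is to control the penalty terms by $G$ with only polynomial loss. For the quadratic term I would expand $G/p$, $Q$, and the cross term $\sum_i\abs{\rr_i}^{p-2}\rr_i\rr_i^\star$ in closed form (using $\AA\Delta^\star=\rr-\rr^\star$), obtaining the identity $Q = 2G/p - f(\xx) + \sum_i\abs{\rr_i}^{p-2}\abs{\rr^\star_i}^2$; Hölder's inequality with exponents $\tfrac{p}{p-2},\tfrac p2$ together with optimality $f(\xx^\star)\le f(\xx)$ bounds the last sum by $f(\xx)$, yielding the clean estimate $Q\le 2G/p$. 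For the $p$-th power term the crude lower estimate $c_p N\le D(\Delta^\star)\le G$ only gives $N\le 2^{O(p)}G$, but this will suffice. Choosing
\[
\lambda \;=\; \min\left\{\,1,\ \frac{G}{8p^2 Q},\ \left(\frac{G}{4p^p N}\right)^{1/(p-1)}\right\}
\]
makes each penalty at most $\tfrac14\lambda G$, so $\residual(\Dopt)\ge \residual(\lambda\Delta^\star)\ge \tfrac12\lambda G\ge \tfrac12\lambda\eta$; and the two bounds above force $\lambda=\Omega(1/p)$, since $G/(8p^2 Q)\ge 1/(16p)$ and the $(p-1)$-st root turns the $2^{O(p)}$ factor into a constant, leaving only $(p^p)^{1/(p-1)}=\Theta(p)$. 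Thus $\residual(\Dopt)=\Omega(\eta/p)$.

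Combining the two halves gives $\eta^{(t+1)}\le\big(1-\Omega(\tfrac1{\kappa p})\big)\,\eta^{(t)}$, i.e.\ geometric decay of the gap. To finish I would bound the initial gap: since $\xx^{(0)}$ minimizes the $\ell_2$ error over the same feasible set and $\norm{\vv}_p\le\norm{\vv}_2\le m^{1/2-1/p}\norm{\vv}_p$ on $\rea^m$, we get $f(\xx^{(0)})\le m^{(p-2)/2}f(\xx^\star)$, hence $\eta^{(0)}\le m^{(p-2)/2}f(\xx^\star)$. Reducing the gap from $\eta^{(0)}$ to the target $\epsilon\,f(\xx^\star)$ therefore takes $O\!\left(\kappa p\,\log\frac{m^{(p-2)/2}}{\epsilon}\right)=O\!\left(\kappa p^2\log\frac{m}{\epsilon}\right)$ steps, as claimed. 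The \emph{main obstacle}, and exactly where this argument beats the exponential-in-$p$ bounds of earlier iterative-refinement analyses, is keeping the dependence polynomial: the closed-form identity $Q\le 2G/p$ disposes of the quadratic term with no $2^{O(p)}$ factor, and the $(p-1)$-st root in the choice of $\lambda$ absorbs the unavoidable exponentially small Bregman constant on the $p$-th power term.
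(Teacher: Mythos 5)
Your proposal is correct, and although it follows the paper's overall iterative-refinement skeleton, it justifies the crucial step by a genuinely different argument. Both proofs get per-step progress $f(\xx-\Dtil)\le f(\xx)-\tfrac1\kappa\residual(\Dopt)$ from the upper half of the two-sided smoothness estimate, lower bound $\residual(\Dopt)=\Omega\bigl((f(\xx)-f(\xx^\star))/p\bigr)$ by evaluating the residual objective at a $\Theta(1/p)$-scaled copy of $\xx-\xx^\star$, and finish with geometric decay plus the $m^{(p-2)/2}$ bound on the initial gap coming from the $\ell_2$ minimizer. The difference is how the $\Theta(1/p)$ scale is made to work. The paper (Lemma~\ref{lem:RelateResidualOpt}) uses the fixed scale $1/(16p)$, which succeeds only because Lemma~\ref{lem:precondition} supplies a \emph{polynomially sharp} quadratic constant in the lower bound, namely $\tfrac{p}{8}\sum_e\rr_e\Delta_e^2$, so that $16p\cdot\tfrac p8\ge 2p^2$ and $(16p)^{p-1}2^{-(p+1)}\ge p^p$; proving that sharp constant is the bulk of the paper's technical work (the long case analysis in the appendix). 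You instead assume only crude $2^{-O(p)}$ lower constants and recover the polynomial dependence by another mechanism: the exact expansion $Q=2G/p-f(\xx)+\sum_i\abs{\rr_i}^{p-2}(\rr_i^\star)^2$, combined with H\"older (exponents $\tfrac{p}{p-2},\tfrac p2$) and optimality $f(\xx^\star)\le f(\xx)$, gives $Q\le 2G/p$, so the quadratic penalty needs no smoothness lower bound at all, while the exponential loss on the $p$-th power penalty is absorbed by the $(p-1)$-st root in your adaptive choice of $\lambda$. I checked the identity and the resulting bounds ($G/(8p^2Q)\ge 1/(16p)$ and $(p^p)^{1/(p-1)}\le 2p$, with H\"older degenerating harmlessly at $p=2$); they are correct, and the iteration count $O(\kappa p^2\log\tfrac m\epsilon)$ follows exactly as in the paper. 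What each approach buys: the paper's route makes the refinement argument a two-line computation at the price of the hard sharp-constant lemma; your route is more modular, needing only the crude two-sided estimate already available in the prior work this paper builds on (so it shows the sharp $\tfrac p8$ constant is not actually necessary for polynomial-in-$p$ refinement), at the price of a slightly more involved, adaptive scaling argument. The one thing to flag is that your "starting ingredient" (the two-sided scalar estimate with upper constants exactly $2p^2$ and $p^p$) is asserted rather than proved; that is acceptable here since it is standard and the paper likewise defers its proof to the appendix, but a self-contained write-up would need it.
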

The next lemma talks about bounding the approximation factor $\kappa$, when we have the right value of $i$.


\begin{restatable}{lemma}{Approximation}(Approximation).\label{lem:Approximation}
Let $\RR,\gg,s,\alpha$ be as defined in lines \eqref{alg:line:DefineR}, \eqref{alg:line:DefineG}, \eqref{alg:line:DefineS} and  \eqref{alg:line:DefineScale} of Algorithm \ref{alg:MainAlgo}. Let $\alpha_0$ be as defined in line \eqref{alg:line:DefineAlpha} of Algorithm \ref{alg:CheckProgress} and $\Dtil$ be the solution of the following program,
\begin{align}
\label{eq:ProblemSolve}
\arg \min_{\Delta} \Delta^{\top}\AA^{\top}(\RR +
                     s\II)\AA\Delta
\quad \text{s.t.} \quad \gg^{\top}\AA\Delta = i/2, \CC\Delta = 0.
\end{align}
If $\Dtil^{\top}\AA^{\top}(\RR +s\II)\AA\Dtil \leq \lambda i/p^2 $ and $\residual(\alpha_0 \cdot \Dtil) \geq \frac{\alpha_0 i}{4}$, then $\alpha \cdot \Dtil$ is an $O\left(p^{1.5}m^{\frac{p-2}{2(p-1)}}\right)$- approximate solution to the residual problem.
\end{restatable}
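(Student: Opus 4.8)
The plan is to sandwich the optimal residual value $\residual(\Dopt)$ between an explicit upper bound and the progress guaranteed by our step, both expressed as multiples of $i$, and read off $\kappa$ as the ratio. Throughout, write $Q(\Delta) = \Delta^{\top}\AA^{\top}\RR\AA\Delta$ for the unpadded quadratic form and $f(\xx) = \norm{\AA\xx-\bb}_p^p$, and recall that the loop invariant (Lemma~\ref{lem:Invariant}) keeps $f(\xx) = O(p\,i)$. First I would record the convexity estimate underlying iterative refinement, a term-by-term inequality for $|\cdot|^p$ giving $\residual(\Delta) \leq f(\xx) - f(\xx - \Delta)$; that is, the residual never overestimates the true objective decrease. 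Since the line search maximizes $f(\xx) - f(\xx-\alpha\Dtil)$ over $\alpha$, the step $\alpha\Dtil$ achieves decrease at least $\residual(\alpha_0\Dtil) \geq \tfrac{\alpha_0}{4}i$, the last inequality being a hypothesis of the lemma. This decrease is exactly what the iterative-refinement scheme (Lemma~\ref{lem:IterativeRefinement}) consumes, so it suffices to compare $\tfrac{\alpha_0}{4}i$ against $\residual(\Dopt)$.

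Next I would upper bound $\residual(\Dopt)$. Dropping the non-positive $\ell_p$-penalty and overestimating by the maximum over scalings $\beta$ of the linear-minus-quadratic expression $\beta\,\gg^{\top}\AA\Dopt - 2p^2\beta^2 Q(\Dopt)$ (which exceeds its value at $\beta=1$) gives $\residual(\Dopt) \leq \tfrac{1}{8p^2}\cdot\tfrac{(\gg^{\top}\AA\Dopt)^2}{Q(\Dopt)}$. Cauchy--Schwarz in the $\RR$-inner product, together with the identity $(\AA\xx-\bb)^{\top}\RR(\AA\xx-\bb) = f(\xx)$ and $\gg = p\RR(\AA\xx-\bb)$, yields $(\gg^{\top}\AA\Delta)^2 \leq p^2 f(\xx)\,Q(\Delta)$ for every $\Delta$. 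Applying this at $\Dopt$ collapses the bound to $\residual(\Dopt) \leq f(\xx)/8 = O(p\,i)$ (the cases $\gg^{\top}\AA\Dopt \leq 0$ or $Q(\Dopt)=0$ being trivial, since then $\residual(\Dopt) \leq 0$).

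It then remains to lower bound $\alpha_0 = \min\{\tfrac{1}{16\lambda},\,(16\lambda k)^{-1/(p-1)}\}$, equivalently to upper bound $k = \tfrac{p^p\norm{\AA\Dtil}_p^p}{2p^2 Q(\Dtil)}$, and this is the heart of the argument and the step the padding $s$ is engineered for. For the numerator I would use $\norm{\AA\Dtil}_p \leq \norm{\AA\Dtil}_2$ and $s\norm{\AA\Dtil}_2^2 \leq \Dtil^{\top}\AA^{\top}(\RR+s\II)\AA\Dtil \leq \lambda i/p^2$ to get $\norm{\AA\Dtil}_2^2 \leq \lambda i/(p^2 s)$; for the denominator the same Cauchy--Schwarz estimate as above, applied with $\gg^{\top}\AA\Dtil = i/2$, gives $Q(\Dtil) \geq \tfrac{i^2}{4p^2 f(\xx)}$. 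Substituting the exact value $s = \tfrac12 i^{(p-2)/p}m^{-(p-2)/p}$ cancels the powers of $i$ (the $p^p$ in the numerator cancels against $(p^2)^{p/2}$) and leaves a factor $m^{(p-2)/2}$; taking the $(p-1)$-st root in $\alpha_0^{-1} = (16\lambda k)^{1/(p-1)}$ converts this into the characteristic $m^{(p-2)/(2(p-1))}$, while the $\poly(p)$ constants (with $\lambda = 16p$ and $f(\xx)=O(p\,i)$) accumulate into a small power of $p$. Combining the three estimates, $\kappa = \tfrac{4\,\residual(\Dopt)}{\alpha_0\, i} = O\big(p^{1.5}m^{(p-2)/(2(p-1))}\big)$.

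The hard part is this last step: it is precisely the balancing of the $\ell_2$-padding against the $\ell_p$-term that manufactures the exponent $\tfrac{p-2}{2(p-1)}$ on $m$, and obtaining the clean $p^{1.5}$ rather than a cruder power requires exploiting the specific choice of $s$ and the invariant $f(\xx)=O(p\,i)$ rather than loose norm conversions. A secondary subtlety is the conceptual point in the first paragraph: although the line search optimizes the objective and not the residual directly, the inequality $\residual(\Delta) \leq f(\xx)-f(\xx-\Delta)$ guarantees that $\alpha\Dtil$ still delivers the progress that the refinement analysis ascribes to a genuine $\kappa$-approximate residual solution.
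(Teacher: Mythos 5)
Your high-level architecture is the same as the paper's --- the line search guarantees objective decrease at least $\residual(\alpha_0\Dtil)\ge\tfrac{\alpha_0}{4}i$, this decrease is what the iterative-refinement analysis consumes, and $\kappa$ is read off as $O\bigl(\residual(\Dopt)/(\alpha_0 i)\bigr)$ with $\alpha_0$ controlled through a bound on $k$ --- but both of your quantitative steps rest on a misquoted invariant, and this is a genuine gap. Writing $f(\xx)=\norm{\AA\xx-\bb}_p^p$ as you do, Lemma \ref{lem:Invariant} does \emph{not} say $f(\xx)=O(p\,i)$; it bounds only the optimality gap, $f(\xx^{(t)})-\opt\le 16p\,i$. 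While the loop runs, the only bound on $f(\xx)$ itself is $f(\xx)<\frac{16p(1+\eps)}{\eps}\,i$ from the termination test, and near convergence $f(\xx)/i$ really is $\Theta(p/\eps)$ (there $f(\xx)\approx\opt$ while $i\approx\frac{\eps}{16p}\opt$). This breaks both places where you invoke the claim: your (otherwise valid) bound $\residual(\Dopt)\le f(\xx)/8$ no longer gives $O(p\,i)$, and your Cauchy--Schwarz lower bound $Q(\Dtil)\ge i^2/(4p^2 f(\xx))$ leaves a factor $f(\xx)/i$ in $k$ that does not cancel against your choice of $s$ --- the cancellation of the powers of $i$ you describe happens only if $f(\xx)\propto i$. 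Tracing it through, your $\kappa$ acquires a $\poly(1/\eps)$ factor, which defeats the purpose of the lemma: the whole point is an $\eps$-independent approximation ratio, so that the iteration count scales as $\log(1/\eps)$.

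Both steps are repairable, and the repairs are exactly what the paper does. For the upper bound on $\residual(\Dopt)$, drop the Cauchy--Schwarz detour: the inequality you already stated in your first paragraph, $\residual(\Delta)\le f(\xx)-f(\xx-\Delta)$, applied at $\Delta=\Dopt$ gives $\residual(\Dopt)\le f(\xx)-\opt\le\lambda i$ by the correctly quoted invariant (this is Lemma \ref{lem:RelateResidualOpt} combined with Lemma \ref{lem:Invariant}). For $k$, do not attempt to lower-bound the unpadded form $Q(\Dtil)$ at all --- that is precisely where $f(\xx)$ sneaks in; instead keep the padded form $D=\Dtil^{\top}\AA^{\top}(\RR+s\II)\AA\Dtil$ in the denominator, as the algorithm itself does, and bound the numerator by $\norm{\AA\Dtil}_p^p\le\norm{\AA\Dtil}_2^{p-2}\norm{\AA\Dtil}_2^2\le\norm{\AA\Dtil}_2^{p-2}\,D/s$, so that $D$ cancels in the ratio and $k\le\frac{p^{p-2}}{2s}\norm{\AA\Dtil}_2^{p-2}$. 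Your own estimate $\norm{\AA\Dtil}_2^2\le\lambda i/(p^2 s)$ together with the value of $s$ then gives $k\le(32pm)^{(p-2)/2}$, with every power of $i$ cancelling and no $f(\xx)$ anywhere (this is the paper's Lemma \ref{lem:Boundk}). With these two substitutions, your first paragraph and your final accounting $\kappa=O(\lambda/\alpha_0)=O\bigl(p^{1.5}m^{\frac{p-2}{2(p-1)}}\bigr)$ go through as written.
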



We next present the loop invariant followed by the conditions for the termination.
\begin{restatable}{lemma}{Invariant}(Invariant)
\label{lem:Invariant}
At every iteration of the while loop, we have  $\CC\xx^{(t)} = \dd,$ 
$ \frac{(\|\AA\xx^{(t)}-\bb\|_p^p -  \|\AA\xx^{\star}-b\|_p^p)}{16p} \leq i$ and $i \geq \frac{\epsilon}{16p(1+\epsilon)}\|\AA\xx^{(0)} - \bb\|_p^p m^{-(p-2)/2}$.
\end{restatable}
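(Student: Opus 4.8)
The plan is to prove the three assertions together by induction on the while-loop iterations, reading the invariant as a statement about the values of $\xx^{(t)}$ and $i$ present at the top of the loop body. Write $\mathrm{gap}(\xx) := \norm{\AA\xx-\bb}_p^p - \norm{\AA\xx^\star-\bb}_p^p \ge 0$. For the base case, $\xx^{(0)} = \arg\min_{\CC\xx=\dd}\norm{\AA\xx-\bb}_2^2$ is feasible so $\CC\xx^{(0)}=\dd$; the initialization $i = \norm{\AA\xx^{(0)}-\bb}_p^p/16p$ gives $\mathrm{gap}(\xx^{(0)})/16p \le \norm{\AA\xx^{(0)}-\bb}_p^p/16p = i$; and the third bound reduces to $1 \ge \tfrac{\epsilon}{1+\epsilon}m^{-(p-2)/2}$, true for $p\ge 2$, $m \ge 1$. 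The feasibility assertion is immediate at every step: the program in line~\ref{alg:line:l2} enforces $\CC\Dtil = 0$, so $\CC\xx^{(t+1)} = \CC\xx^{(t)} - \alpha\,\CC\Dtil = \dd$, regardless of whether $i$ is halved.

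The second assertion is the crux. The line search of line~\ref{alg:line:DefineScale} guarantees $\norm{\AA\xx^{(t+1)}-\bb}_p^p \le \norm{\AA\xx^{(t)}-\bb}_p^p$ (the choice $\alpha=0$ is available), so $\mathrm{gap}$ is nonincreasing across one iteration. If $i$ is not halved, the inductive bound $\mathrm{gap}(\xx^{(t)}) \le 16p\,i$ transfers verbatim to $\xx^{(t+1)}$ and we are done. If \textsc{InsufficientProgressCheck} returns true and $i$ becomes $i/2$, I must instead show $\mathrm{gap}(\xx^{(t+1)}) \le 16p\,(i/2) = 8p\,i$; since $\mathrm{gap}$ only decreases, it suffices to prove the contrapositive that $\mathrm{gap}(\xx^{(t)}) > 8p\,i$ forces the check to return \emph{false}, i.e.\ forces both $\Dtil^\top\AA^\top(\RR+s\II)\AA\Dtil \le \lambda i/p^2$ and $\residual(\alpha_0\Dtil) \ge \tfrac{\alpha_0}{4}i$.

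To establish both conditions under a large gap, I would feed the $\ell_2$ program the scaled error-to-optimum direction $\Delta^\ast = \tfrac{i/2}{\gg^\top\AA(\xx^{(t)}-\xx^\star)}(\xx^{(t)}-\xx^\star)$; it is feasible since $\CC(\xx^{(t)}-\xx^\star)=0$, and convexity of $\norm{\AA\cdot-\bb}_p^p$ gives $\gg^\top\AA(\xx^{(t)}-\xx^\star)\ge \mathrm{gap}(\xx^{(t)}) > 8p\,i$, so the scaling factor is smaller than $1/16p$. Because $\Dtil$ minimizes the weighted quadratic form, $\Dtil^\top\AA^\top(\RR+s\II)\AA\Dtil \le \Delta^{\ast\top}\AA^\top(\RR+s\II)\AA\Delta^\ast$, and bounding the right-hand side against $\mathrm{gap}(\xx^{(t)})$ via the local-norm (smoothness) inequalities relating $\sum_j |(\AA\xx^{(t)}-\bb)_j|^{p-2}(\AA(\xx^{(t)}-\xx^\star))_j^2$ and the padding term $s\norm{\AA(\xx^{(t)}-\xx^\star)}_2^2$ to the gap yields the quadratic-form bound; the residual bound then follows because $\alpha_0$ and $k$ in line~\ref{alg:line:DefineAlpha} are calibrated precisely so that the two penalty terms $2p^2\alpha_0^2\Dtil^\top\AA^\top\RR\AA\Dtil$ and $p^p\alpha_0^p\norm{\AA\Dtil}_p^p$ in $\residual(\alpha_0\Dtil)$ are each a small fraction of the linear term $\alpha_0 i/2$. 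Getting all of these constants — the factor $16$, $\lambda = 16p$, the exponents hidden in $\alpha_0$, and the padding $s = \tfrac12 i^{(p-2)/p}m^{-(p-2)/p}$ — to line up is the main obstacle, and it is exactly the bookkeeping already carried out in Lemma~\ref{lem:Approximation}, which I would invoke rather than redo.

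The lower bound on $i$ is the easiest part and needs no direct reasoning about halving. Whenever the loop body runs, the guard held, so $i > \tfrac{\epsilon}{16p(1+\epsilon)}\norm{\AA\xx^{(t)}-\bb}_p^p$. Feasibility of $\xx^\star$ gives $\norm{\AA\xx^{(t)}-\bb}_p^p \ge \norm{\AA\xx^\star-\bb}_p^p$, and since $\xx^{(0)}$ minimizes the $\ell_2$ objective over the same feasible set, the norm-equivalence chain $\norm{\AA\xx^{(0)}-\bb}_p \le \norm{\AA\xx^{(0)}-\bb}_2 \le \norm{\AA\xx^\star-\bb}_2 \le m^{1/2-1/p}\norm{\AA\xx^\star-\bb}_p$ gives $\norm{\AA\xx^\star-\bb}_p^p \ge m^{-(p-2)/2}\norm{\AA\xx^{(0)}-\bb}_p^p$. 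Combining, $i > \tfrac{\epsilon}{16p(1+\epsilon)}m^{-(p-2)/2}\norm{\AA\xx^{(0)}-\bb}_p^p$; and were a halving to drop $i$ below this threshold, the next guard check would fail and the loop would terminate, so the invariant is never violated while the body executes.
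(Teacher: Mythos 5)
Your overall architecture matches the paper's proof: induction over loop iterations, feasibility from the constraint $\CC\Delta=0$ in the $\ell_2$ program, monotonicity of the objective from the line search, a case split on whether $i$ is halved (with the crux being that a gap above $8pi$ forces \textsc{InsufficientProgressCheck} to return false), and a lower bound on $i$ read off the loop guard together with the $\ell_2$-optimality of $\xx^{(0)}$. Your handling of the first and third assertions is correct --- the guard-based lower bound is in fact cleaner than the paper's argument, which reasons about the value of $i$ at termination --- and your contrapositive set-up for the second assertion is logically sound: combined with the inductive hypothesis $\norm{\AA\xx^{(t)}-\bb}_p^p - \norm{\AA\xx^{\star}-\bb}_p^p \le 16pi$, your hypothesis that the gap exceeds $8pi$ is exactly the window $(i/2, i]$ (after dividing by $16p$) in which the paper shows the check cannot fire.

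The genuine gap is in how you propose to close that crux. You defer the constant bookkeeping to Lemma~\ref{lem:Approximation}, but that lemma points in the opposite direction: it \emph{assumes} the two check conditions ($\Dtil^{\top}\AA^{\top}(\RR+s\II)\AA\Dtil \le \lambda i/p^2$ and $\residual(\alpha_0\Dtil)\ge \alpha_0 i/4$) and concludes an approximation ratio for the residual problem. It cannot be invoked to \emph{derive} those conditions from a lower bound on the gap; doing so would be circular. The implication you actually need is a separate result, which the paper proves in the supplementary material as Lemma~\ref{lem:CheckProgress}, resting on Lemma~\ref{lem:OPT2Decision} (scaling a suitable direction to meet the linear constraint $\gg^{\top}\AA\Delta = i/2$), Claim~\ref{cl:RelateObj} (converting a bound on the objective $2p^2\Delta^{\top}\AA^{\top}\RR\AA\Delta + p^p\norm{\AA\Delta}_p^p$ into a bound on the padded quadratic form, which is where the choice of $s$ enters), and the smoothness bounds of Lemma~\ref{lem:precondition}. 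Your sketched alternative --- feeding the scaled direction $\Delta^{\ast}\propto \xx^{(t)}-\xx^{\star}$ into program~\eqref{eq:ProblemSolve} and using optimality of $\Dtil$ --- is morally the same as the paper's route and can be made to work, but the two estimates it requires (bounding $(\AA(\xx^{(t)}-\xx^{\star}))^{\top}\RR\,\AA(\xx^{(t)}-\xx^{\star})$ and $s\norm{\AA(\xx^{(t)}-\xx^{\star})}_2^2$ against the gap, the latter via $\norm{\cdot}_2^2\le m^{(p-2)/p}\norm{\cdot}_p^2$) are precisely the content of those appendix lemmas and are not supplied by any result you cite; as written, this step is asserted rather than proved. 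Only your final calibration step --- that $\alpha_0 + k\alpha_0^{p-1}\le \tfrac{1}{8\lambda}$ turns the quadratic-form bound into $\residual(\alpha_0\Dtil)\ge \alpha_0 i/4$ --- is stated in a form that genuinely closes once the quadratic-form bound is in hand.
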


\begin{restatable}{lemma}{termination}(Termination).
\label{lem:Termination}
Let $i$ be such that $(\|\AA\xx^{(t)}-\bb\|_p^p -  \|\AA\xx^{\star}-b\|_p^p)/16p \in (i/2,i]$. Then,
\[ \textstyle i \leq \frac{\epsilon}{16p(1+\epsilon)}
  \|\AA\xx^{(t)}-\bb\|_p^p \Rightarrow \|\AA\xx^{(t)}-\bb\|_p^p \leq
  (1+\epsilon)\opt.
\]
and,
\[ \textstyle \|\AA\xx^{(t)}-\bb\|_p^p \leq (1+\epsilon)\opt
  \Rightarrow i \leq 2\frac{\epsilon}{16p(1+\epsilon)}
  \|\AA\xx^{(t)}-\bb\|_p^p.
\]
\end{restatable}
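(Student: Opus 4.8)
The plan is to treat this as a purely algebraic consequence of the bracketing hypothesis on $i$, requiring no analytic input about the $\ell_p$ geometry. Writing $f = \|\AA\xx^{(t)}-\bb\|_p^p$ for the current objective value and recalling $\opt = \|\AA\xx^{\star}-\bb\|_p^p$, I would first unpack the hypothesis $(f-\opt)/16p \in (i/2, i]$ into the two-sided bound
\[ 8p\, i < f - \opt \le 16p\, i. \]
Both implications then follow by chaining this bound with the respective assumption, so I expect this to be the most elementary lemma of the section.

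For the first implication I would assume $i \le \frac{\epsilon}{16p(1+\epsilon)} f$ and substitute into the right inequality $f - \opt \le 16p\, i$:
\[ f - \opt \le 16p\, i \le 16p \cdot \frac{\epsilon}{16p(1+\epsilon)} f = \frac{\epsilon}{1+\epsilon}\, f. \]
Rearranging $f - \opt \le \frac{\epsilon}{1+\epsilon} f$ gives $\frac{1}{1+\epsilon} f \le \opt$, i.e.\ $f \le (1+\epsilon)\opt$, which is the desired conclusion.

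For the second implication I would start from $f \le (1+\epsilon)\opt$, which is equivalent to $\opt \ge f/(1+\epsilon)$ and hence to $f - \opt \le \frac{\epsilon}{1+\epsilon} f$. Feeding this into the left inequality $8p\, i < f - \opt$ yields
\[ 8p\, i < f - \opt \le \frac{\epsilon}{1+\epsilon}\, f, \]
so $i < \frac{\epsilon}{8p(1+\epsilon)} f = 2 \cdot \frac{\epsilon}{16p(1+\epsilon)} f$, exactly the claimed bound.

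The only subtlety to watch is the orientation of the strict versus non-strict inequalities coming from the half-open interval $(i/2, i]$: its left endpoint is strict (producing $8p\, i < f-\opt$, used in the second implication) while its right endpoint is non-strict (producing $f-\opt \le 16p\, i$, used in the first). These happen to line up precisely with the directions needed, so there is no genuine obstacle. Conceptually, the lemma is just the bookkeeping that certifies the discretized quantity $i$ as a faithful two-sided proxy for the true optimality gap $f - \opt$, which is what legitimizes using $i$ in the stopping criterion of the while loop.
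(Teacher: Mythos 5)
Your proof is correct and follows essentially the same route as the paper's: both implications are obtained by chaining the two endpoints of the bracketing hypothesis $(f-\opt)/16p \in (i/2, i]$ with the respective assumption, using the non-strict right endpoint for the forward direction and the (strict) left endpoint for the converse. Your explicit bookkeeping of which endpoint feeds which implication matches the paper's argument exactly, so there is nothing to add.
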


We next see how Lemmas \ref{lem:IterativeRefinement},\ref{lem:Approximation}, \ref{lem:Invariant}, and \ref{lem:Termination} together imply our main result, Theorem \ref{thm:MainTheorem}.

\subsection{Proof of Theorem \ref{thm:MainTheorem}}
\begin{proof}
We first show that at termination, the algorithm returns an $\epsilon$-approximate solution. We begin by noting that the quantity $i$ can only decrease with every iteration. At iteration $t$, let $i_0$ denote the smallest number such that $(\|\AA\xx^{(t)}-\bb\|_p^p -  \|\AA\xx^{\star}-b\|_p^p)/16p \in (i_0/2,i_0]$. Note that $i$ must be at least $i_0$ (Lemma  \ref{lem:Invariant}). Let us first consider the termination condition of the while loop. When we terminate, $\frac{\epsilon \|\AA\xx^{(t)}-\bb\|_p^p}{16p(1+\epsilon)} \geq i \geq i_0$. Lemma \ref{lem:Termination} now implies that $\norm{\AA\xx^{(t)}-\bb}_p^p \leq (1+\epsilon)OPT$. Lemma \ref{lem:Invariant} also shows that at each iteration our solution satisfies $\CC\xx^{(t)} = \dd$, therefore the solution returned at termination also satisfies the subspace constraints.

We next prove the running time bound. Note that the objective is non
increasing with every iteration. This is because the {\sc LineSearch}
returns a factor that minimizes the objective given a direction
$\Dtil$, i.e.,
$\alpha = \arg\min_{\delta}\|\AA(\xx-\delta\Dtil)-\bb\|_p^p$, which
could also be zero.

We now show that at every iteration the algorithm either reduces $i$
or finds $\Dtil$ that gives a
$O\left( p^{1.5}m^{\frac{p-2}{2(p-1)}}\right)$-approximate solution to
the residual problem. Consider an iteration where the
  algorithm does not reduce $i.$ It suffices to prove that in this
  iteration, the algorithm obtains an
  $O\left( p^{1.5}m^{\frac{p-2}{2(p-1)}}\right)$-approximate solution
  to the residual problem. Since the algorithm does not reduce $i,$ we
  must have $\residual(\alpha_0 \Dtil) \geq \alpha_0 i /4,$ and
  $\Dtil^{\top}\AA^{\top}(\RR+s\II) \AA \Dtil \leq \lambda i/p^2.$ It follows from Lemma~\ref{lem:Approximation}, we know that $\Dtil$ gives the required approximation to the residual problem.

Thus, the algorithm either reduces $i$ or returns an
$O\left( p^{1.5}m^{\frac{p-2}{2(p-1)}}\right)$-approximate solution to
the residual problem. The number of steps in which we reduce $i$ is at
most
$\log(i_{initial}/i_{min}) = p \log\left(\frac{m}{\epsilon}\right)$
(Lemma \ref{lem:Invariant} gives the value of $i_{min}$).  By Lemma
\ref{lem:IterativeRefinement}, the number of steps where the algorithm
finds an approximate solution before it has found a
$(1+\eps)$-approximate solution is at most
$O\left(p^{3.5}m^{\frac{p-2}{2(p-1)}} \log \left(\frac{m}{\epsilon}
  \right)\right)$. Thus, the total number of iterations required by
our algorithm is
$O\left(p^{3.5}m^{\frac{p-2}{2(p-1)}} \log \left(\frac{m}{\epsilon}
  \right)\right)$, completing the proof of the theorem.
\end{proof}

\section{Experiments}
\label{sec:experiments}
\begin{wrapfigure}{r}{0.33\textwidth}
  \centering
  \vspace{-20pt}
    \includegraphics[width=0.3\textwidth]{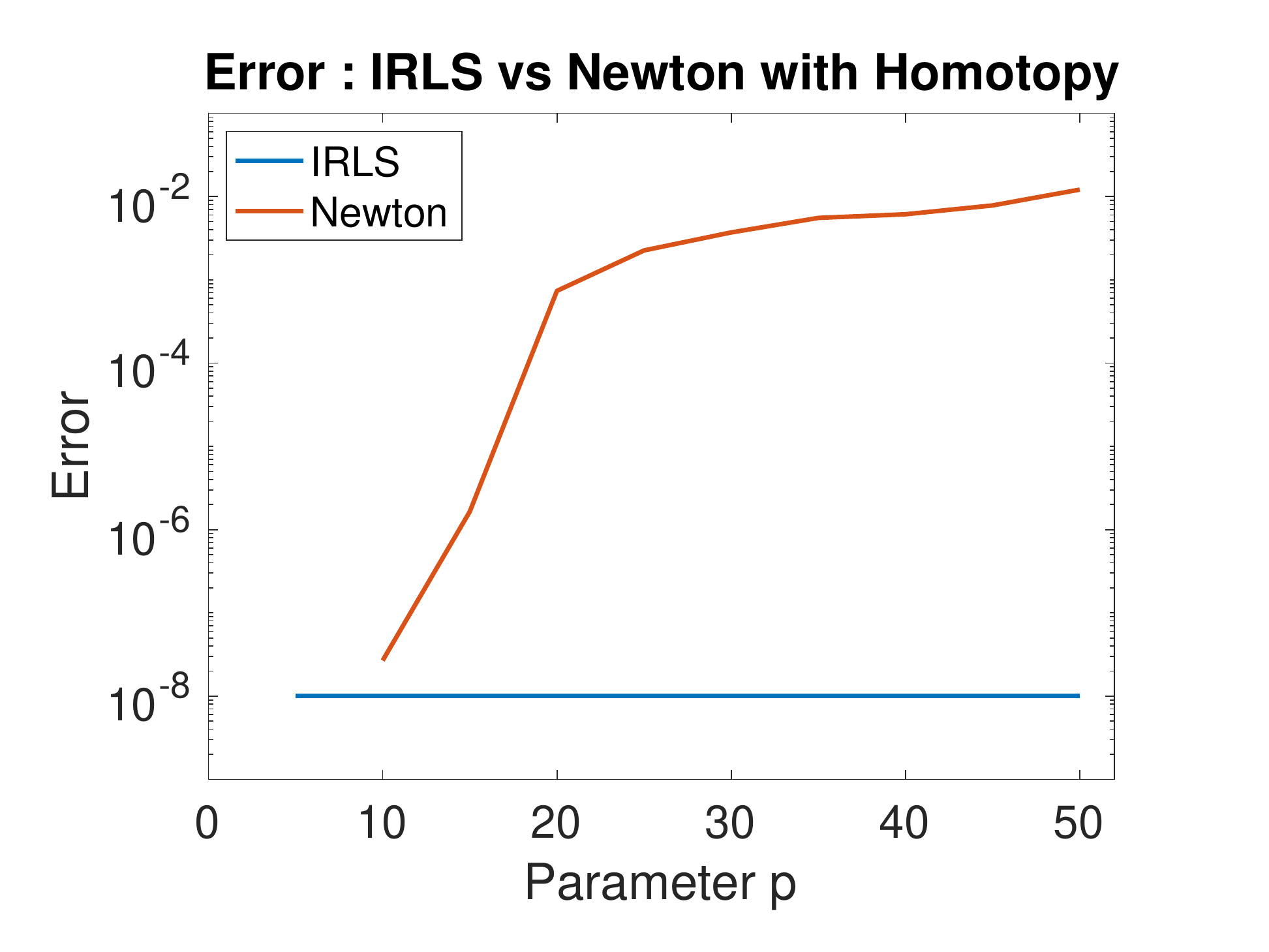}
\caption{Averaged over 100 random samples. Graph: $1000$ nodes ($5000$-$6000$ edges). Solver: PCG with Cholesky preconditioner.}
\label{fig:IRLSvsNt}
\end{wrapfigure}

In this section, we detail our results from experiments studying the
performance of our algorithm, {\Alg}. We implemented our algorithm in
Matlab on a standard desktop machine, and evaluated its performance on
two types of instances, random instances for $\ell_p$-regression, and
graphs for $p$-Laplacian minimization. We study the scaling behavior
of our algorithm as we change $p,\epsilon,$ and the size of the
problem. We compare our performance to the Matlab/CVX solver that is
guaranteed to find a good solution, and to the IRLS/homotopy based
implementation from~\cite{CalderFL19} that is not guaranteed to
converge, but runs quite well in practice. We now describe our
instances, parameters and experiments in detail.

\begin{figure}
\subfloat[Size of $\AA$ fixed to $1000 \times 850$.]{  \includegraphics[width =0.3\textwidth]{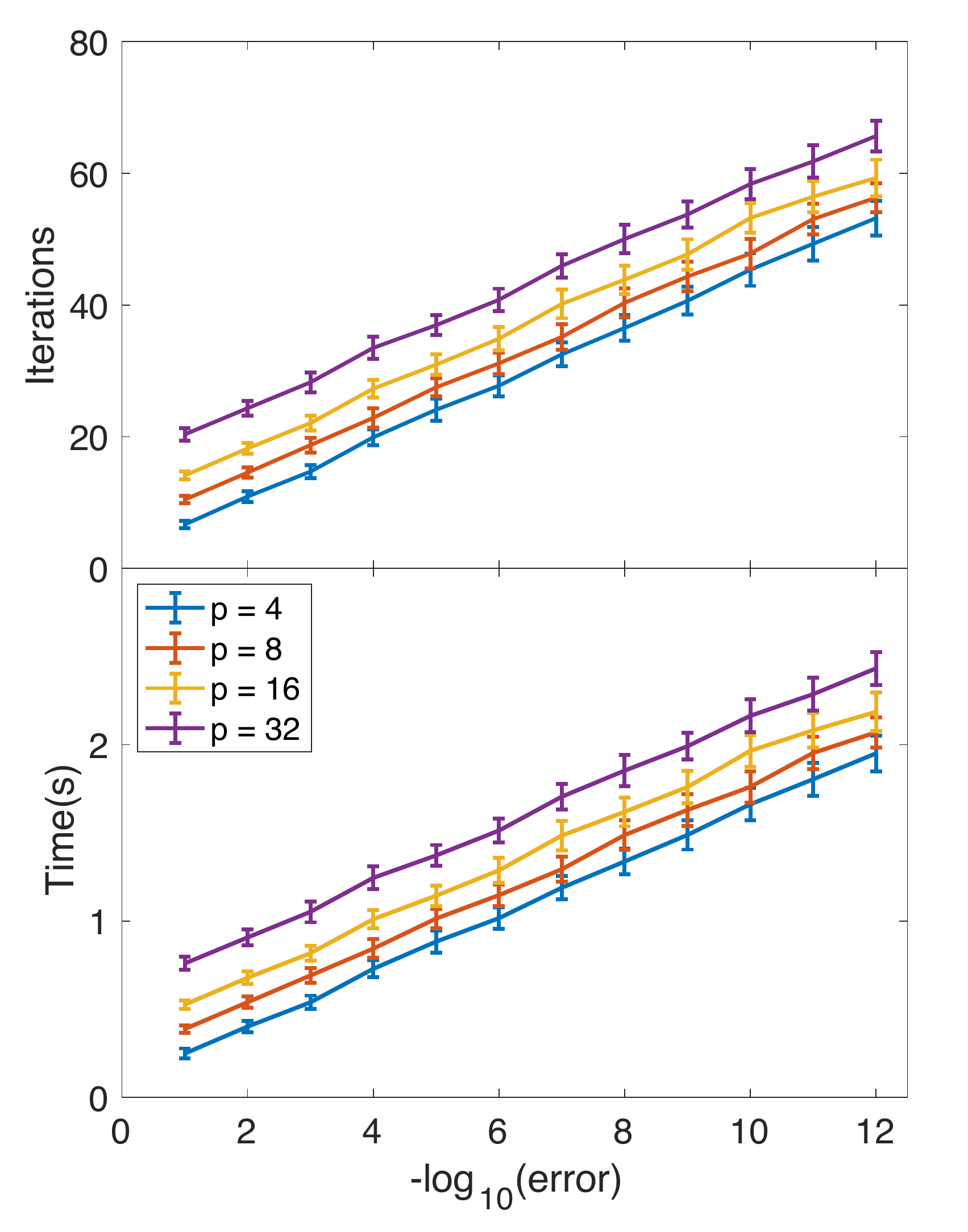}}
\hfill
\subfloat[Sizes of $\AA$: $(50+100(k-1))\times 100k$. Error $\eps = 10^{-8}.$]{  \includegraphics[width =0.3\textwidth]{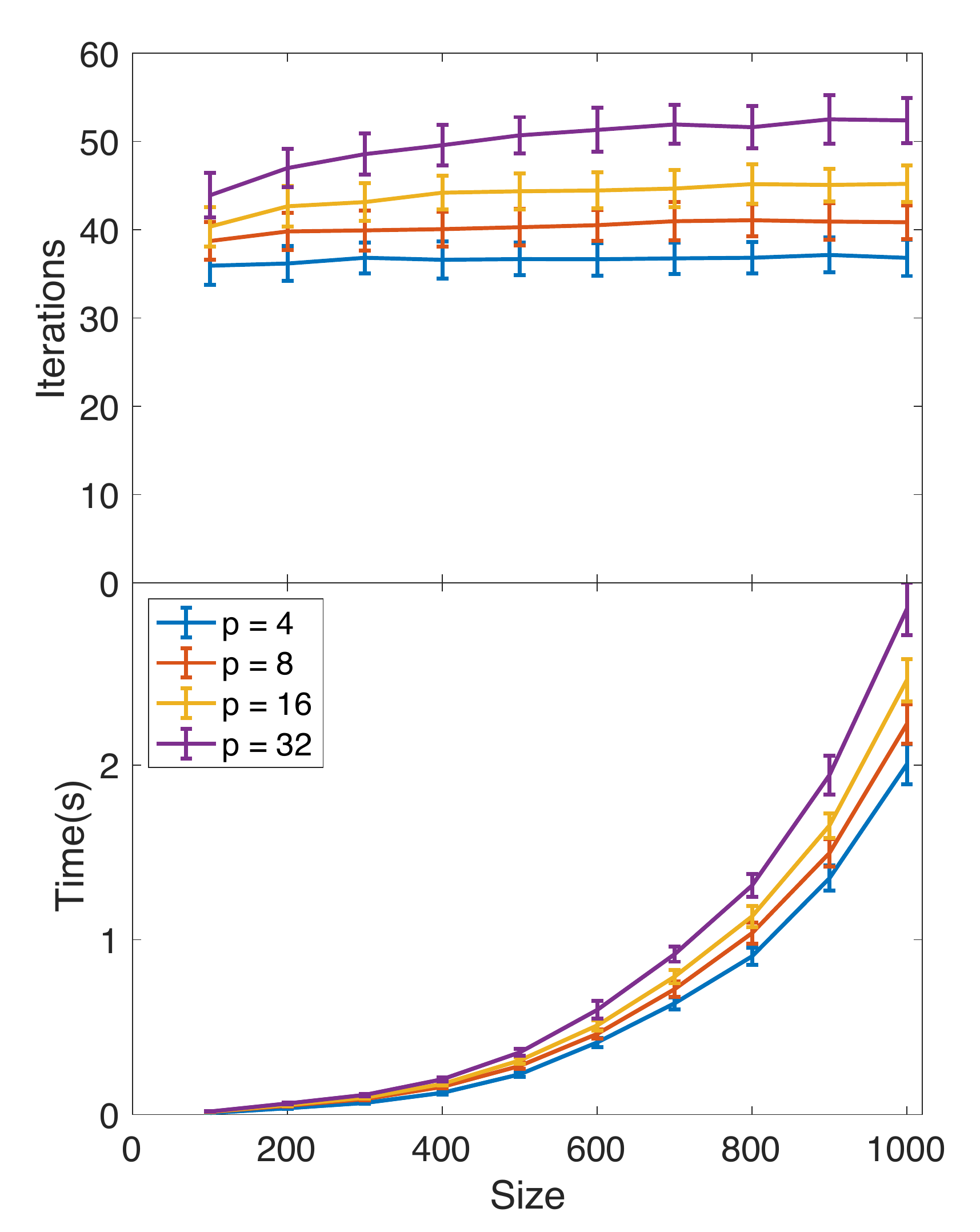}}
\hfill
 \subfloat[Size of $\AA$ is fixed to $1000 \times 850$. Error $\eps = 10^{-8}.$]{  \includegraphics[width =0.3\textwidth]{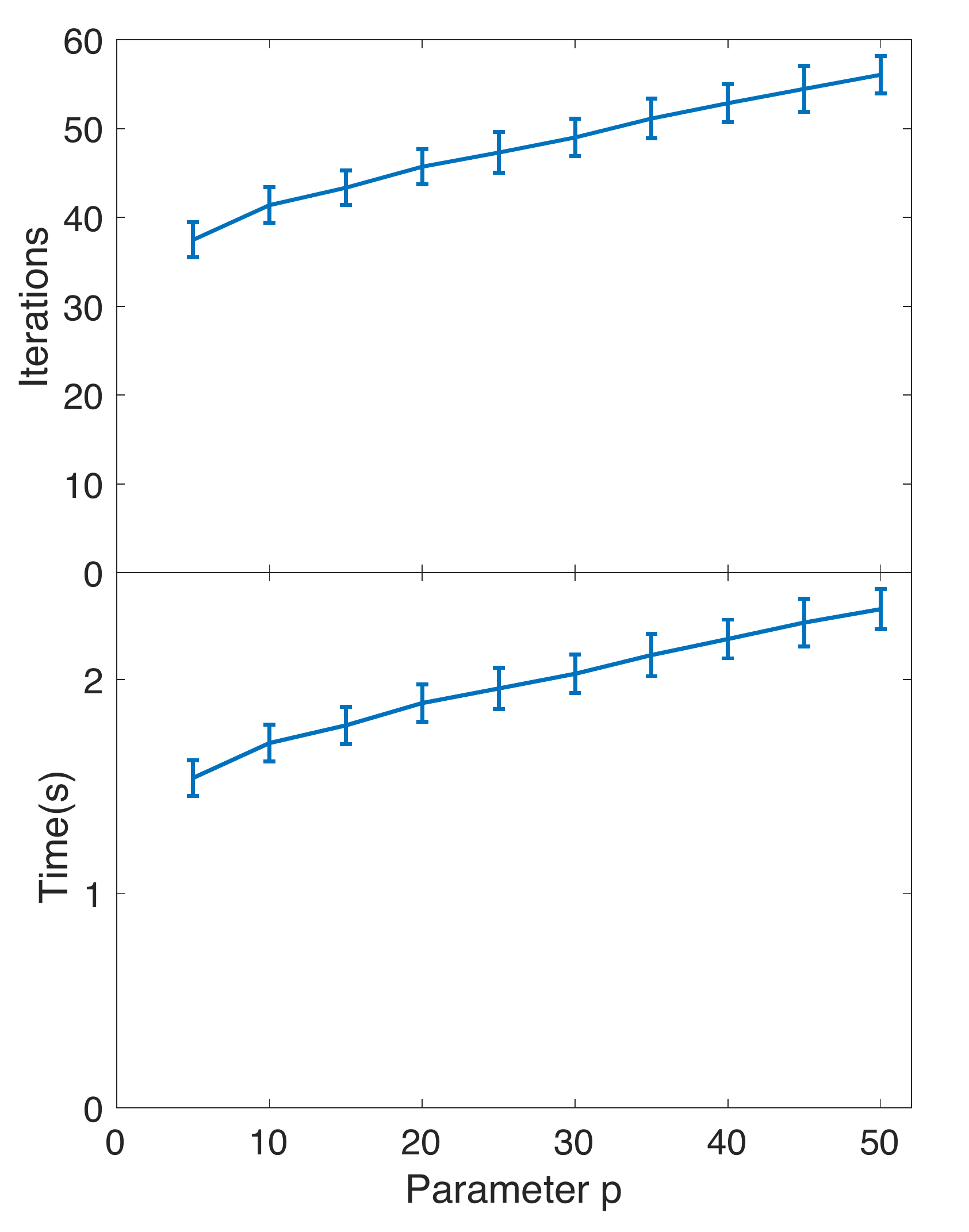}}
   \caption{Random Matrix instances. Comparing the number of iterations and time taken by our algorithm with the parameters. Averaged over 100 random samples for $\AA$ and $\bb$. Linear solver used : backslash.}
   \label{fig:Matrices}
\end{figure}

\begin{figure}
\subfloat[Size of graph fixed to $1000$ nodes (around $5000$-$6000$ edges).]{  \includegraphics[width =0.3\textwidth]{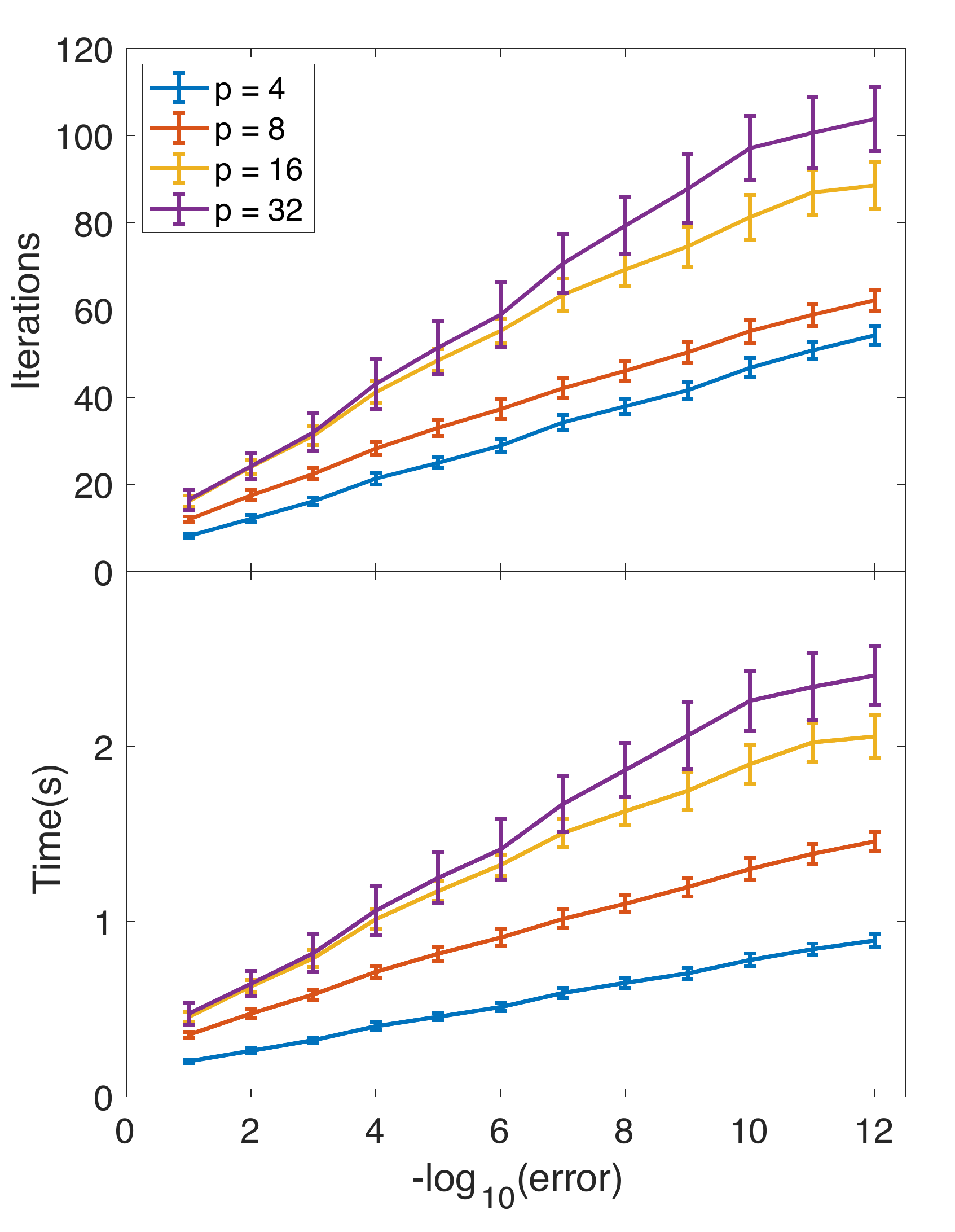}}
\hfill
\subfloat[Number of nodes: $100k$. Error $\eps = 10^{-8}.$]{  \includegraphics[width =0.3\textwidth]{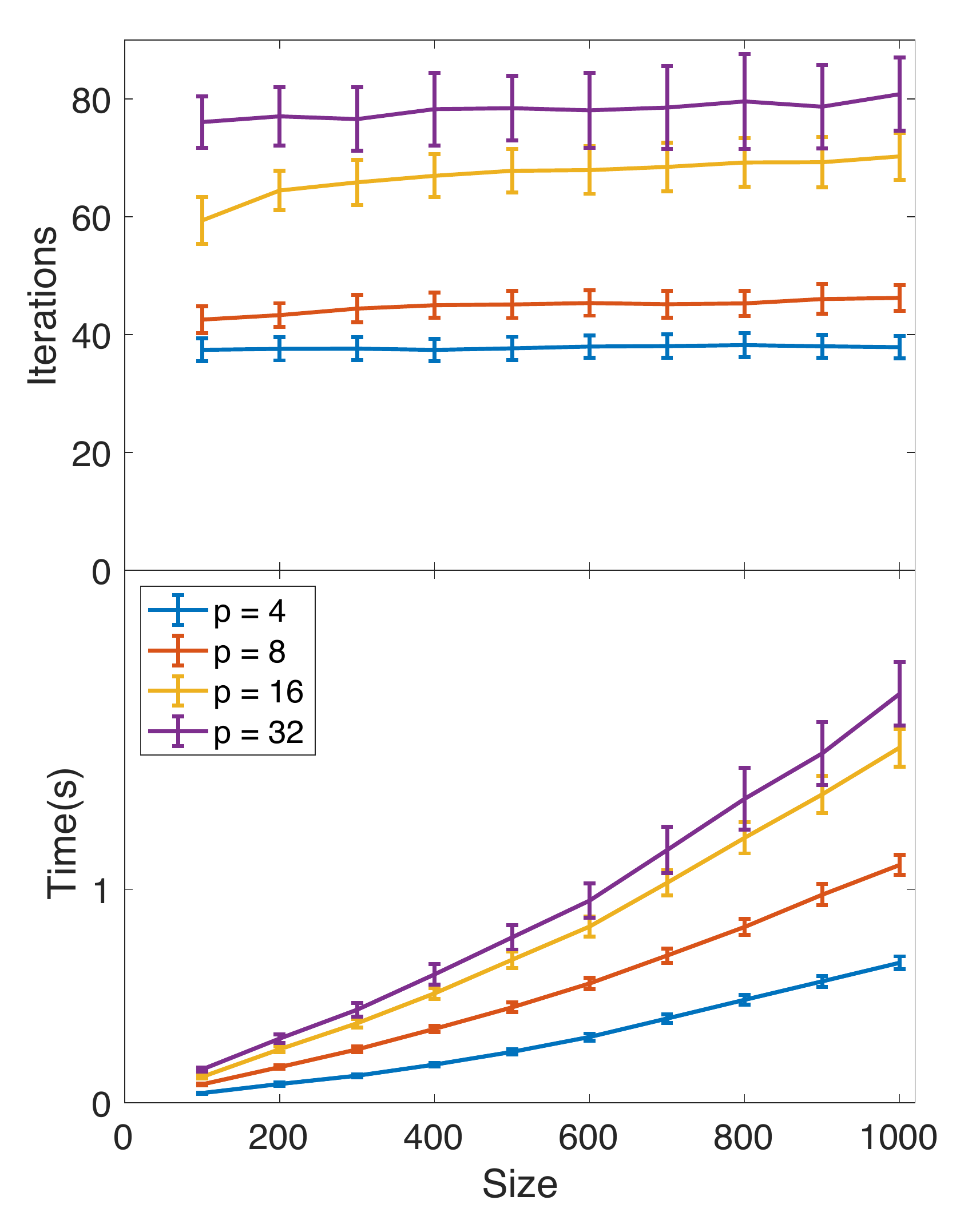}}
\hfill
 \subfloat[Size of graph fixed to $1000$ nodes (around $5000$-$6000$ edges). Error $\eps = 10^{-8}.$]{  \includegraphics[width =0.3\textwidth]{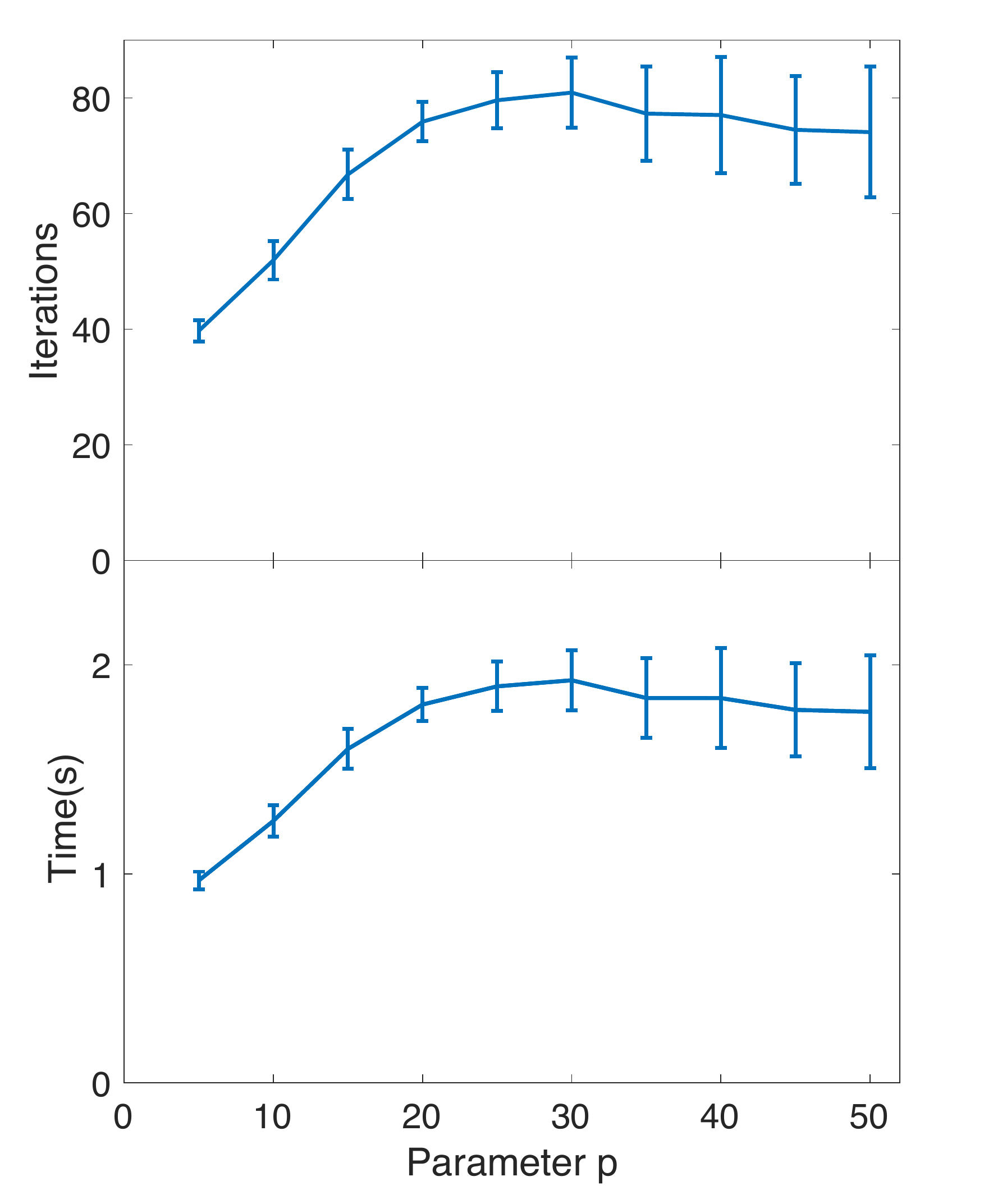}}
   \caption{Graph Instances. Comparing the number of iterations and time taken by our algorithm with the parameters. Averaged over 100 graph samples. Linear solver used : backslash.}
   \label{fig:Graphs}
\end{figure}

\begin{figure}
\centering
   \subfloat[Fixed $p = 8$. Size of matrices: $100k \times (50+100(k-1))$.] {\includegraphics[width=0.23\textwidth]{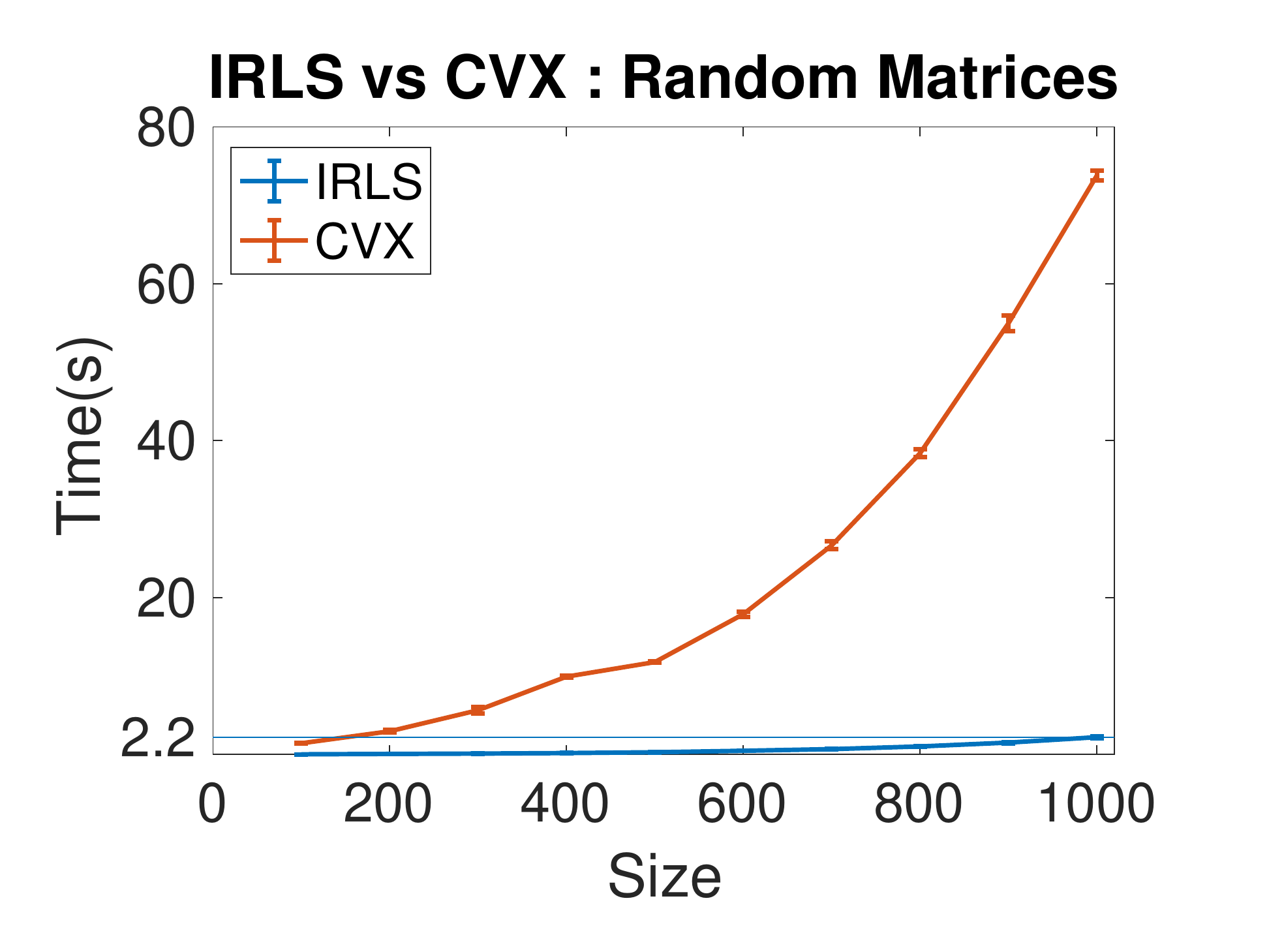}}
  \hfill
  \subfloat[Size of matrices fixed to $500 \times 450$.]{\includegraphics[width=0.23\textwidth]{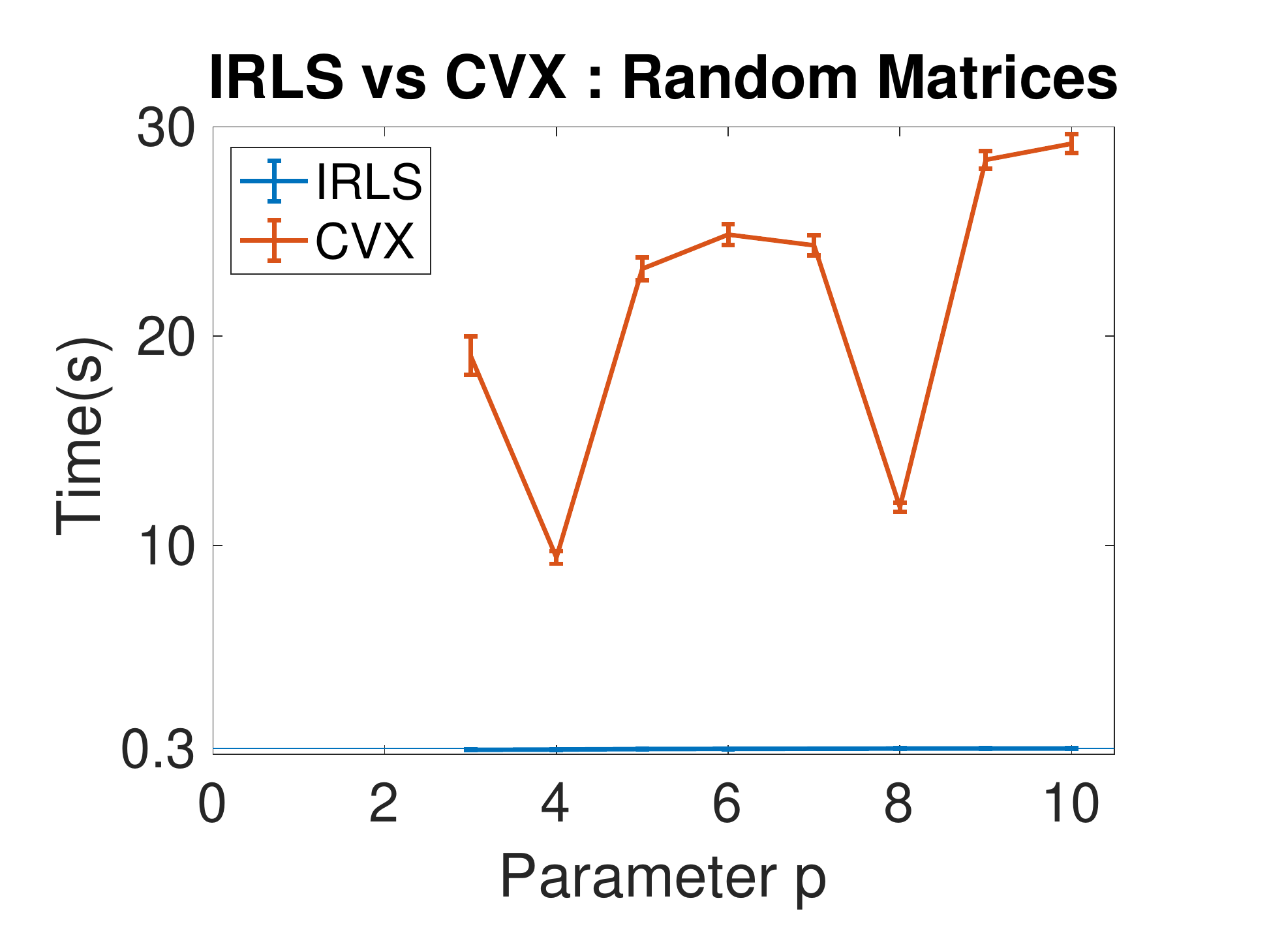}}
  \hfill
  \centering
    \subfloat[Fixed $p = 8$. The number of nodes : $50k, k = 1,2,...,10$.]{\includegraphics[width=0.23\textwidth]{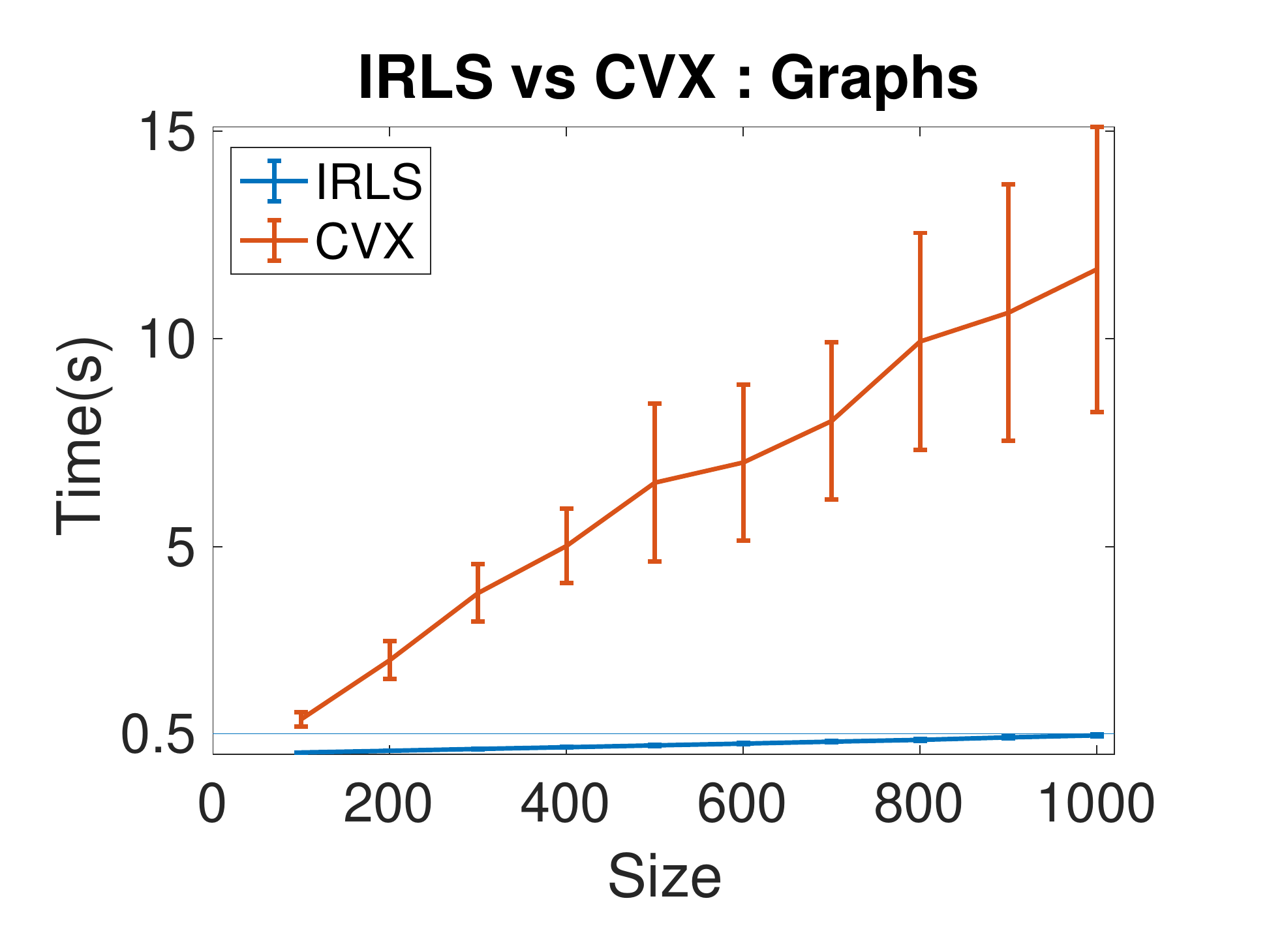}}
  \hfill
    \subfloat[Size of graphs fixed to $400$ nodes ( around $2000$ edges). ]{\includegraphics[width=0.23\textwidth]{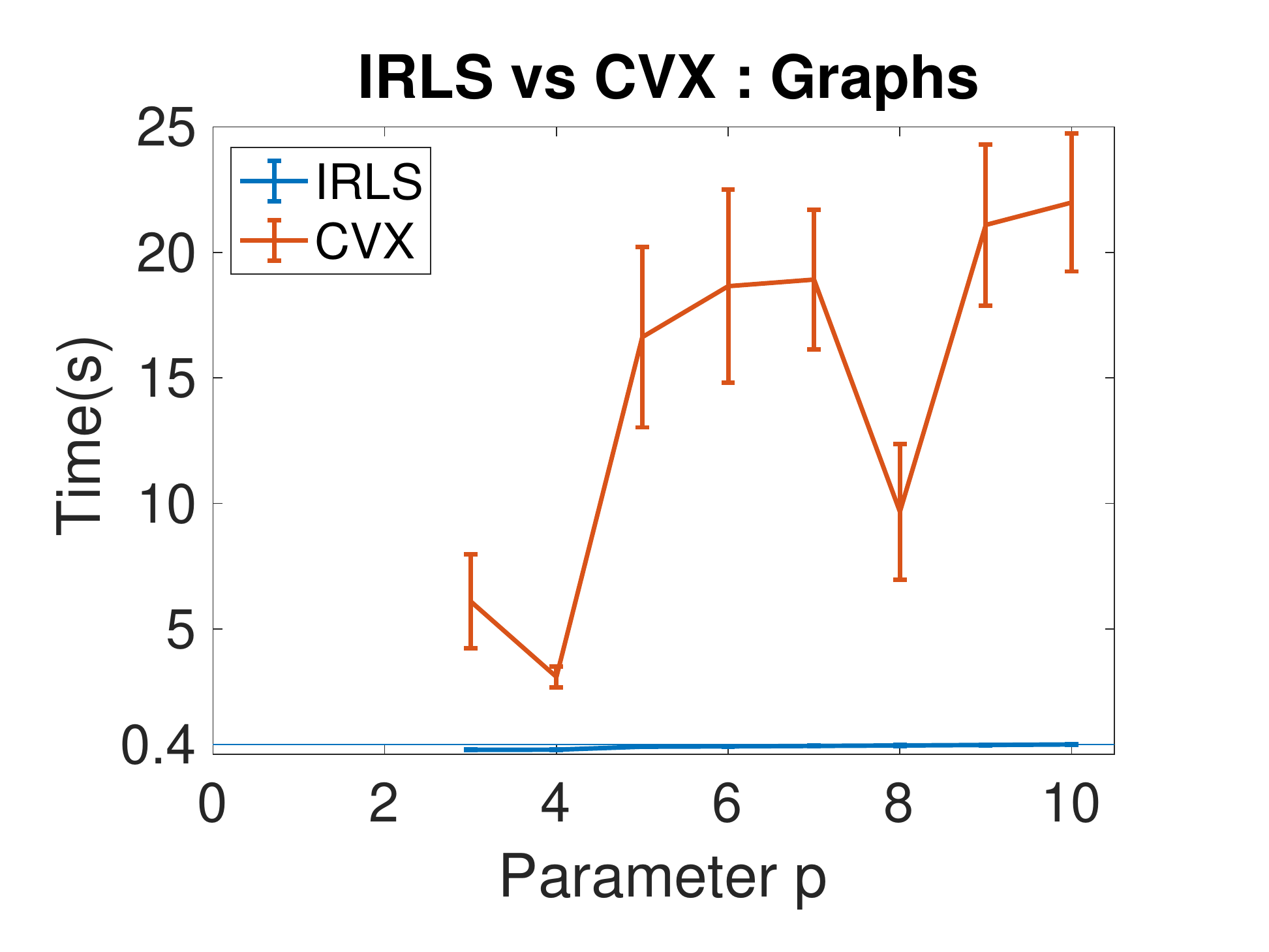}}
 \caption{Averaged over 100 samples. Precision set to $\epsilon = 10^{-8}$.CVX solver used : SDPT3 for Matrices and Sedumi for Graphs.}
 \label{fig:CVXvsIRLS}
\end{figure}

\paragraph{Instances and Parameters.} We consider two types of instances, random matrices and graphs. 
\begin{tight_enumerate}
\item {\bf Random Matrices:} We want to solve the problem $\min_{x} \norm{\AA\xx-\bb}_p$. In these instances we use random matrices $\AA$ and $\bb$, where every entry of the matrix is chosen uniformly at random between $0$ and $1$.
\item {\bf Graphs:} We use the graphs described in \cite{CalderFL19}.  The set of vertices is generated by choosing vectors in $[0,1]^{10}$ uniformly at random and the edges are created by connecting the $10$ nearest neighbours. Weights of each edge is specified by a gaussian type function (Eq 3.1,\cite{CalderFL19}). Very few vertices (around 10) have labels  which are again chosen uniformly at random between $0$ and $1$. The problem studied on these instances is to determine the minimizer of the $\ell_p$ laplacian. We formulate this problem into the form $\min_{x} \norm{\AA\xx-\bb}_p^p$, details of this formulation can be found in the Appendix that is in the supplementary material. 

\end{tight_enumerate}
Note that we have $3$ different parameters for each problem, the size of the instance i.e., the number of rows of matrix $\AA$, the norm we solve for, $p$, and the accuracy to which we want to solve each problem, $\epsilon$. We will consider each of these parameters independently and see how our algorithm scales with them for both instances.

\paragraph{Benchmark Comparisons.} We compare the performance of our program with the following:
\begin{tight_enumerate}
\item Standard MATLAB optimization package, CVX \cite{cvx, gb08}.
\item The most efficient algorithm for $\ell_p$-semi supervised learning given in \cite{CalderFL19} was newton's method with homotopy. We take their hardest problem, and compare the performance of their code with ours by running our algorithm for the same number of iterations as them and showing that we get closer to the optimum, or in other words a smaller error $\epsilon$, thus showing we converge much faster.
\end{tight_enumerate}

\paragraph{Implementation Details.} We normalize the instances by running our algorithm once and dividing the vector $\bb$ by the norm of the final objective, so that our norms at the end are around $1$. We do this for every instance before we measure the runtime or the iteration count for uniformity and to avoid numerical precision issues. All experiments were performed on MATLAB 2018b on a Desktop ubuntu machine with an Intel Core $i5$-$4570$ CPU @ $3.20GHz \times 4$ processor and 4GB RAM. For the graph instances, we fix the dimension of the space from which we choose vertices to $10$ and the number of labelled vertices to be $10$. The graph instances are generated using the code \cite{Flores19} by \cite{CalderFL19}. Other details specific to the experiment are given in the captions.

\subsection{Experimental Results}
\paragraph{Dependence on Parameters.} Figure \ref{fig:Matrices} shows the dependence of the number of iterations and runtime on our parameters for random matrices. Similarly for graph instances, Figure \ref{fig:Graphs} shows the dependence of iteration count and runtime with the parameters. As expected from the theoretical guarantees, the number of iterations and runtimes increase linearly with $\log\left(\frac{1}{\epsilon}\right)$. The dependence on size and $p$ are clearly much better in practice (nearly constant and at most linear respectively) than the theoretical bounds ($m^{1/2}$ and $p^{3.5}$ respectively) for both kinds of instances.

\paragraph{Comparisons with Benchmarks.} 
\begin{tight_itemize}
\item Figure \ref{fig:CVXvsIRLS} shows the runtime comparison between our
  IRLS algorithm {\Alg} and CVX. For all instances, we ensured that our
  final objective was smaller than the objective of the CVX solver. As
  it is clear for both kinds of instances, our algorithm takes a lot
  lesser time and also increases more slowly with size and $p$ as
  compared to CVX. Note that that CVX does a lot better when
  $p = 2^k$, but it is still at least $30$-$50$ times slower for
  random matrices and $10$-$30$ times slower for graphs.
\item Figure \ref{fig:IRLSvsNt} shows the performance of our algorithm
  when compared to the IRLS/Homotopy method of \cite{CalderFL19}. We
  use the same linear solvers for both programs, preconditioned
  conjugate gradient with an incomplete cholesky preconditioner and
  run both programs to the same number of iterations. The plots
  indicate the value $\epsilon$ as described previously. For our IRLS
  algorithm we indicate our upper bound on $\epsilon$ and for their
  procedure we indicate a lower bound on $\epsilon$ which is the
  relative difference in the objectives achieved by the two
  algorithms. It is clear that our algorithm achieves an error that is
  orders of magnitudes smaller than the error achieved by their
  algorithm. This shows that our algorithm has a much faster rate of
  convergence. Note that there is no guarantee on the convergence of
  the method used by \cite{CalderFL19}, whereas we prove that our
  algorithm converges in a small number of iterations.
\end{tight_itemize}


\section{Discussion}
To conclude, we present {\Alg}, the first IRLS algorithm that provably
converges to a high accuracy solution in a small number of
iterations. This settles a problem that has been open for over three
decades. Our algorithm is very easy to implement and we demonstrate
that it works very well in practice, beating the standard optimization
packages by large margins. The theoretical bound on the numbers of
iterations has a sub-linear dependence on size and a small polynomial
dependence on $p$, however in practice, we see an almost constant
dependence on size and at most linear dependence on $p$ in random
instances and graphs. In order to achieve the best theoretical bounds
we would require some form of acceleration. For $\ell_1$ and
$\ell_{\infty}$ regression, it has been shown that it is possible to
achieve acceleration, however without geometric convergence. It
remains an open problem to give a practical IRLS algorithm which
simultaneously has the best possible theoretical convergence bounds.


\section*{Acknowledgements}
DA is supported by SS's NSERC Discovery grant and an Ontario Graduate
Scholarship.  SS is supported by the Natural Sciences and Engineering
Research Council of Canada (NSERC), a Connaught New Researcher award,
and a Google Faculty Research award. RP is partially supported by the
NSF under Grants No. 1637566 and No. 1718533.

\bibliographystyle{alpha}
\bibliography{main.bib}
\newpage

\begin{appendices}

\section{Coordinate Wise Convergence vs Convergence in Objective Value}

For an
algorithm with a $\log \frac{1}{\eps}$ dependence of the running time
for computing a $(1+\eps)$-approximate solution, like {\Alg}, the
guarantee can be translated into a guarantee for convergence in the
solution without any significant loss in the runtime complexity of the
method. We demonstrate this theoretically and experimentally below. 

\begin{lemma}
If $\xx$ is a $(1+\delta)$-approximate solution and $\xx^{\star}$ is the optimum, then
\[\norm{\xx-\xx^{\star}}_{\infty} \le \frac{2m^{\frac{1}{2}}}{\sigma_{\min}(\AA)}\left( \frac{2\delta}{m}
  \right)^{\frac{1}{p}} \norm{\AA\xx^{\star} - \bb}_{p},\]
  where $\sigma_{\min}(\AA)$ is the smallest singular value of $\AA$.
\end{lemma}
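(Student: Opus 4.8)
The plan is to reduce the $\ell_\infty$ distance between $\xx$ and $\xx^\star$ to a bound on $\norm{\AA(\xx-\xx^\star)}_p$, and then show that a $(1+\delta)$-approximation in the objective forces $\norm{\AA(\xx-\xx^\star)}_p^p$ to be of order $\delta\cdot\opt$. Write $\rr^\star = \AA\xx^\star-\bb$, $\rr = \AA\xx-\bb$, $\Delta = \xx-\xx^\star$, and $f(\xx)=\norm{\AA\xx-\bb}_p^p$, so that $\AA\Delta = \rr-\rr^\star$. Since both $\xx$ and $\xx^\star$ are feasible, $\CC\Delta = 0$.

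The first and central step is a growth lower bound on the objective gap. I would invoke the standard $\ell_p$-convexity inequality used throughout the iterative-refinement literature: for $p\ge 2$ and all reals $a,b$,
\[ |a+b|^p \ge |a|^p + p|a|^{p-2}ab + \frac{1}{2^{p+1}}|b|^p. \]
Applying this coordinate-wise with $a=\rr^\star_i$, $b=(\AA\Delta)_i$ and summing gives
\[ f(\xx) - f(\xx^\star) \ge p\sum_i |\rr^\star_i|^{p-2}\rr^\star_i (\AA\Delta)_i + \frac{1}{2^{p+1}}\norm{\AA\Delta}_p^p. \]
The first-order term is exactly $\nabla f(\xx^\star)^\top \Delta$. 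By first-order optimality of $\xx^\star$ for the constrained problem, $\nabla f(\xx^\star)$ lies in the row space of $\CC$, hence is orthogonal to every $\Delta$ with $\CC\Delta=0$; so that term vanishes (in the unconstrained case $\nabla f(\xx^\star)=0$ directly). Combining with the hypothesis $f(\xx)\le(1+\delta)f(\xx^\star)$ yields
\[ \frac{1}{2^{p+1}}\norm{\AA\Delta}_p^p \le f(\xx)-f(\xx^\star) \le \delta\,\norm{\rr^\star}_p^p, \]
so $\norm{\AA\Delta}_p^p \le 2^{p+1}\delta\,\norm{\rr^\star}_p^p$.

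It then remains to pass to an $\ell_\infty$ bound on $\Delta$ itself. I would use the smallest singular value together with the power-mean inequality for $p\ge 2$:
\[ \norm{\Delta}_\infty \le \norm{\Delta}_2 \le \frac{1}{\sigma_{\min}(\AA)}\norm{\AA\Delta}_2 \le \frac{m^{1/2}}{\sigma_{\min}(\AA)}\left(\frac{1}{m}\norm{\AA\Delta}_p^p\right)^{1/p}. \]
Substituting the bound on $\norm{\AA\Delta}_p^p$ gives
\[ \norm{\Delta}_\infty \le \frac{m^{1/2}}{\sigma_{\min}(\AA)}\left(\frac{2^{p+1}\delta}{m}\right)^{1/p}\norm{\rr^\star}_p = \frac{2\,m^{1/2}}{\sigma_{\min}(\AA)}\left(\frac{2\delta}{m}\right)^{1/p}\norm{\rr^\star}_p, \]
using $(2^{p+1})^{1/p}=2\cdot 2^{1/p}$ to split the constant; this is precisely the claimed inequality.

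The main obstacle is the growth bound of the first paragraph: one needs the sharp constant $2^{-(p+1)}$ (which is what produces both the leading factor $2$ and the $2\delta$ inside the parenthesis in the target — any larger constant would only make $\norm{\Delta}_\infty$ smaller and still satisfy the stated upper bound), and one must correctly argue the linear term drops out via the constrained optimality condition $\CC\Delta=0$ rather than unconstrained stationarity. Everything else is routine norm conversions valid for $p\ge 2$, together with the assumption that $\AA$ has full column rank so that $\sigma_{\min}(\AA)>0$.
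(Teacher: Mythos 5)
Your proof is correct and follows essentially the same route as the paper's: apply the lower bound of the smoothing inequality (the paper's Lemma~\ref{lem:precondition}) at $\xx^{\star}$, kill the linear term by optimality, conclude $\norm{\AA(\xx-\xx^{\star})}_p^p \le 2^{p+1}\delta\,\norm{\AA\xx^{\star}-\bb}_p^p$, and finish with the $\sigma_{\min}$ and $\ell_2$-to-$\ell_p$ norm conversions. The only (welcome) difference is that you justify the vanishing of the linear term in the constrained setting via $\nabla f(\xx^{\star})$ lying in the row space of $\CC$, where the paper simply asserts the gradient is zero.
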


\begin{proof}
Given that $\xx$ is a $(1+\delta)$-approximate solution, using
Lemma \ref{lem:precondition}, we can write the following lower bound on the objective
value:
\begin{align*}
  (1+\delta)\norm{\AA\xx^{\star} - \bb}_{p}^{p}
    & \ge \norm{\AA\xx^{\star} - \bb}_{p}^{p} + p\left( \AA \xx^{\star} -
      \bb\right)^{\top} \RR \AA(\xx -
      \xx^{\star})
      \\ & + \nicefrac{p}{8} \AA(\xx -
      \xx^{\star})^{\top}\AA^{\top} \RR \AA (\xx - \xx)^{\star} +
      {2^{-(p+1)}} \norm{\AA\xx-\AA\xx^{\star}}_{p}^{p},
\end{align*}
where $\RR = \text{diag}(|\AA\xx^{\star}-\bb|^{p-2}).$ Since the
gradient at $\xx^{\star}$ is 0, simplifying, we get,
$ 2^{p+1}\delta\norm{\AA\xx^{\star} - \bb}_{p}^{p} \ge
\norm{\AA\xx-\AA\xx^{\star}}_{p}^{p}.$ Now, translating between
various norms, we obtain,
\vspace{-2pt}
\begin{align*}
  \textstyle
  \norm{\xx-\xx^{\star}}_{\infty} \le  \frac{1}{\sigma_{\min}(\AA)}
  \norm{\AA\xx-\AA\xx^{\star}}_{2}
  \le  \frac{m^{\frac{1}{2} - \frac{1}{p}}}{\sigma_{\min}(\AA)}
  \norm{\AA\xx-\AA\xx^{\star}}_{p}
  \le
  \frac{2m^{\frac{1}{2}}}{\sigma_{\min}(\AA)}\left( \frac{2\delta}{m}
  \right)^{\frac{1}{p}} \norm{\AA\xx^{\star} - \bb}_p. 
\end{align*}
\end{proof}

We can achieve the guarantee
$\norm{\xx-\xx^{\star}}_{\infty} \le \eps\norm{\AA\xx^{\star} -
  \bb}_{p}$ by picking
$\delta = \left( \frac{\eps\sigma_{\min}(\AA)}{4m} \right)^{p}$. This
gives $\log \frac{m}{\delta} = O(p \log \frac{m}{\sigma_{\min}(\AA) \eps}),$
and hence a total iteration count of $O(p^{4.5} m^{\frac{p-2}{2(p-1)}} \log \frac{m}{{\sigma_{\min}(\AA)
    \eps}}).$ Asymptotically, the running time bound is only off by a
factor of $p$ if we wish to measure the convergence in
$\ell_{\infty}$-norm, as long as
$\log \frac{1}{\sigma_{\min}(\AA)} = O(\log \frac{m}{\eps}).$

\begin{figure}[H]
\includegraphics[width=0.48\textwidth]{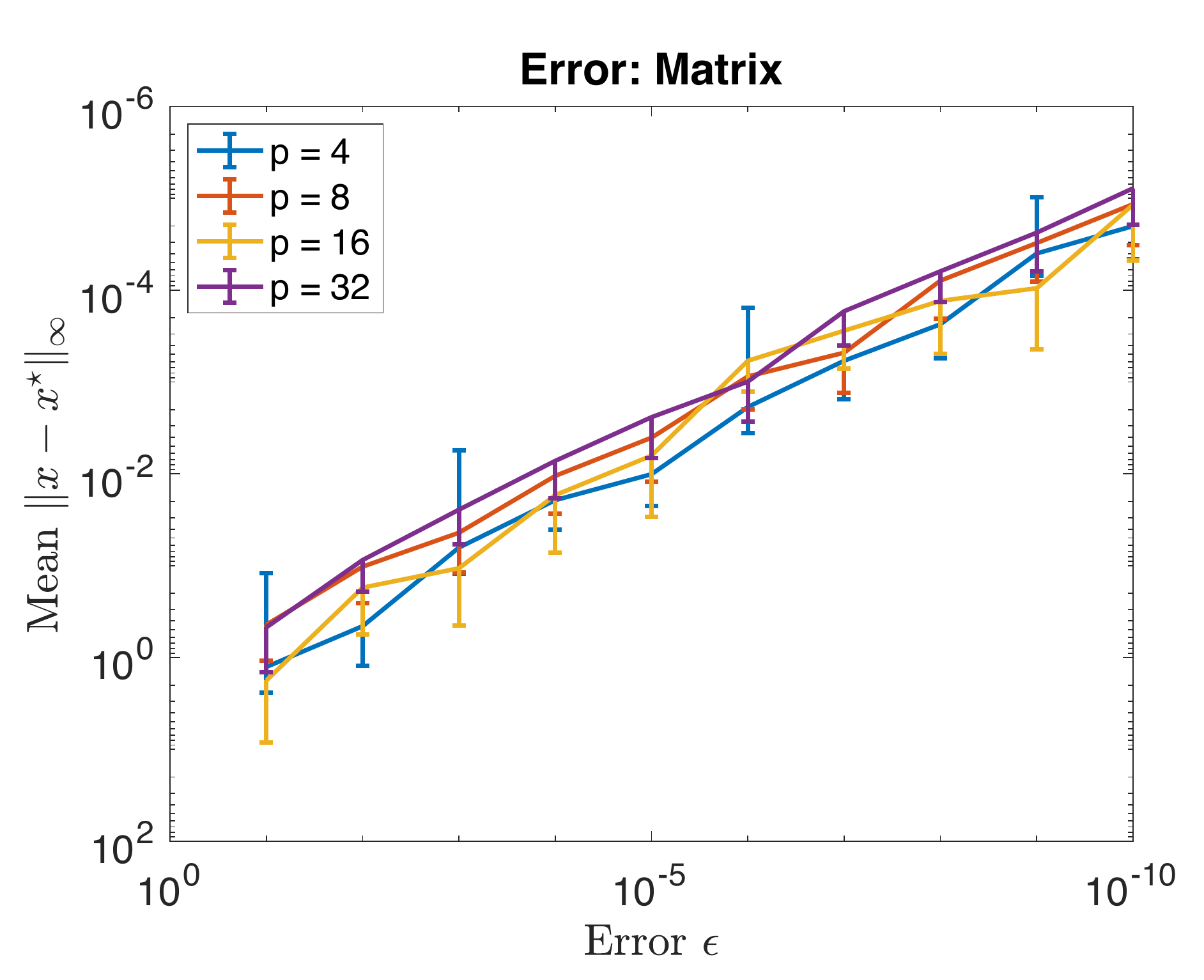}
\hfill
\includegraphics[width=0.48\textwidth]{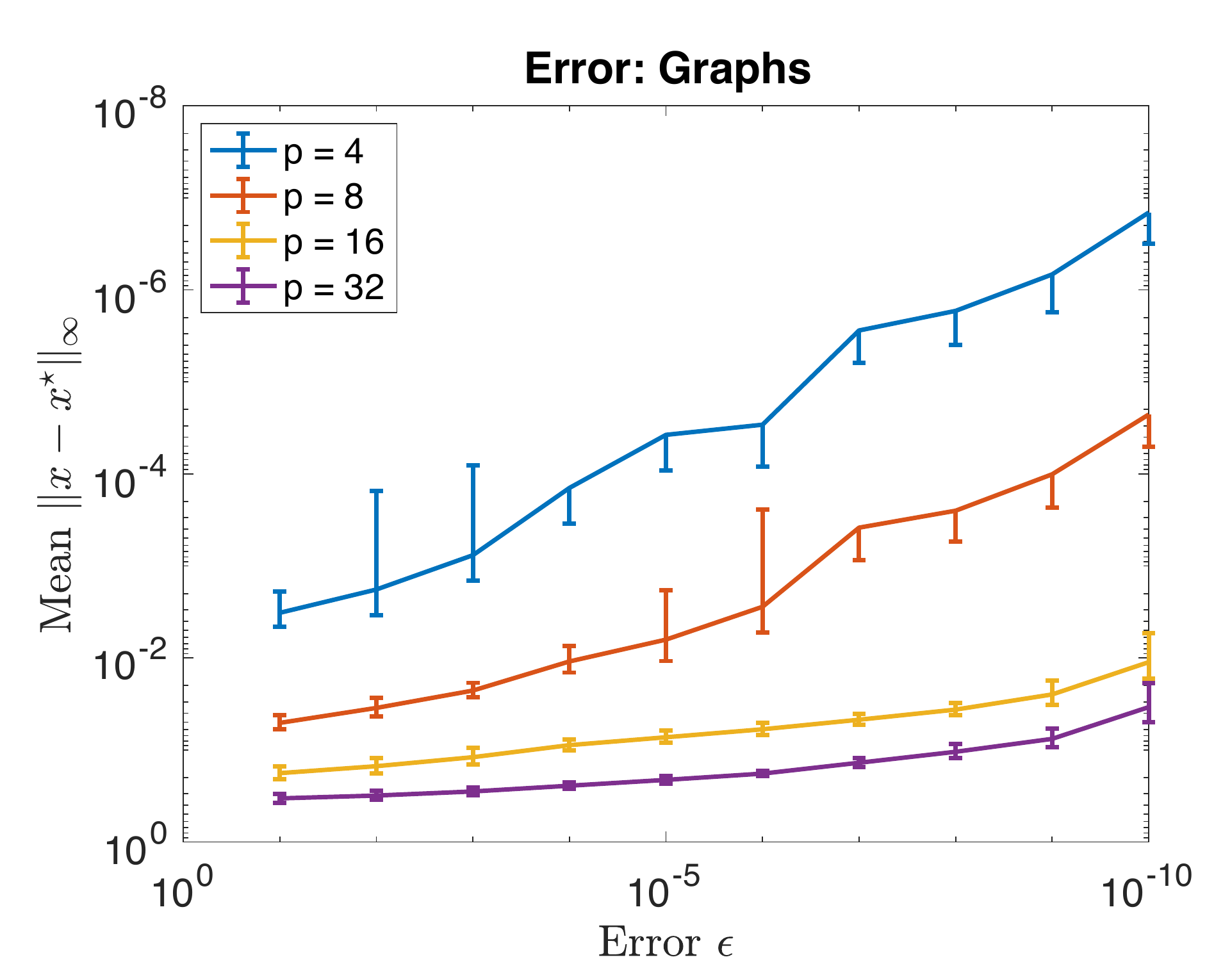}
\caption{Maximum coordinate wise difference with the optimum vs accuracy to which the objective is close to the optimum, for both graphs and random matrix instances.}
\label{fig:coordinateError}
 \end{figure}
%

 We also demonstrate this relation experimentally. The plots in Figure \ref{fig:coordinateError}
 demonstrate the average resulting $\ell_{\infty}$ norm deviation for
 the solution computed, as we change the $\eps$ parameter used in the
 algorithm. We use the instances described in the paper; matrices
 of size $1000\times 800$ and graphs with $1000$ nodes.
 For each instance, we: 1) find a very high accuracy solution, by
 choosing a very small $\eps \sim 10^{-25},$ 2) scale the problem so
 that the optimum value is $1$, and run the algorithm again to find
 the optimum solution $\xx^{\star}$. 3) Now we have a problem such
 that $\norm{\AA\xx^{\star}-\bb}_{p}=1,$ we run the algorithm again
 with various values of $\eps,$ to obtain solutions $\xx(\eps)$ and
 plot $\norm{\xx(\eps)-\xx^{\star}}_{\infty}$ (averaged over $20$
 samples). These results are very much in agreement with the
 theoretical $\eps^{\frac{1}{p}}$ dependence proved above. (Note that
 the error bars indicate $\log (\text{mean} \pm \text{std})$ so they
 are missing on one side when $\text{mean} < \text{std}.$)

\section{Proofs from Section \ref{sec:Algo}}

\subsection{Proof of Lemma \ref{lem:IterativeRefinement}}
\IterativeRefinement*
We first show that we can upper and lower bound the change in objective by a linear term plus a quadratically smoothed function.
\begin{lemma}\label{lem:precondition}
For any $\xx,\Delta$ and $p \geq 2$, we have for $\rr =|\xx|^{p-2}$ and $\gg = p |\xx|^{p-2}\xx$,
\[
\frac{p}{8} \sum_e \rr_e \Delta_e^2 + \frac{1}{2^{p+1}}\norm{\Delta}_p^p \leq \norm{\xx+\Delta}^p_p - \norm{\xx}_p^p - \gg^{\top}\Delta \leq 2 p^2 \sum_e \rr_e \Delta_e^2 + p^p \norm{\Delta}_p^p.
\]
\end{lemma}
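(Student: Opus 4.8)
The plan is to exploit that every quantity in the statement is \emph{separable} across coordinates. Since $\norm{\xx}_p^p = \sum_e |x_e|^p$, the gradient satisfies $\gg^{\top}\Delta = \sum_e p|x_e|^{p-2}x_e\,\Delta_e$, and both $\sum_e \rr_e \Delta_e^2$ and $\norm{\Delta}_p^p = \sum_e|\Delta_e|^p$ are sums over $e$, so the whole chain follows by summing a single scalar inequality over the coordinates. Writing $a = x_e$ and $b = \Delta_e$, it suffices to prove for all $a,b \in \rea$ and $p \ge 2$ that
\[
\tfrac{p}{8}|a|^{p-2}b^2 + \tfrac{1}{2^{p+1}}|b|^p \ \le\ |a+b|^p - |a|^p - p|a|^{p-2}a\,b \ \le\ 2p^2|a|^{p-2}b^2 + p^p|b|^p .
\]

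First I would normalize. All three expressions are homogeneous of degree $p$ in $(a,b)$ and invariant under $(a,b)\mapsto(-a,-b)$, so I may assume $a \ge 0$. The case $a=0$ reduces to $2^{-(p+1)}|b|^p \le |b|^p \le p^p|b|^p$, trivial for $p \ge 2$; for $a>0$ I rescale to $a=1$ and set $t=b$, leaving the one-variable inequality
\[
\tfrac{p}{8}t^2 + \tfrac{1}{2^{p+1}}|t|^p \ \le\ \phi(t) \ \le\ 2p^2 t^2 + p^p|t|^p, \qquad \phi(t) \defeq |1+t|^p - 1 - pt .
\]
Here $\phi$ is convex with $\phi(0)=\phi'(0)=0$, and the main tool is the integral-remainder form $\phi(t) = p(p-1)\int_0^t (t-s)|1+s|^{p-2}\,ds$.

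The upper bound is the easier half, and I would treat it uniformly in the sign of $t$. When $(p-1)|t|\ge 1$, the triangle inequality gives $\phi(t) \le (1+|t|)^p \le (p|t|)^p = p^p|t|^p$; when $(p-1)|t| < 1$, the integration range keeps $|1+s|^{p-2} \le (1+\tfrac1{p-1})^{p-2} < e$, so the remainder yields $\phi(t) \le \tfrac{e}{2}p(p-1)t^2 \le 2p^2 t^2$. For the lower bound with $t \ge 0$ (so $a,b$ share a sign) the situation is manageable: $|1+s|\ge 1$ gives $\phi(t)\ge \tfrac{p(p-1)}{2}t^2$, which covers the quadratic term for $|t|\le 1$, while for $|t|\ge1$ the factor $(1+t)^p$ is exponentially large and, after splitting off a $\tfrac12$-fraction, supplies the $2^{-(p+1)}|t|^p$ term with room to spare.

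The genuinely delicate case, and the one I expect to be the main obstacle, is the lower bound when $a$ and $b$ have \emph{opposite} signs, i.e.\ $t<0$. Writing $t=-\tau$, the convexity factor $|1+s|^{p-2}=(1-u)^{p-2}$ degenerates as $u\to 1$ for $p>2$, so the integral remainder no longer delivers a clean quadratic lower bound near $\tau \approx 1$, precisely where $|1+t|^p$ nearly vanishes and $\phi$ behaves only linearly (like $p\tau-1$). I would handle this by a finer split: for small $\tau$ the remainder still gives $\phi \gtrsim p(p-1)\tau^2$; for $\tau$ near and above $1$ I would instead discard the nonnegative $|1-\tau|^p$ and use $\phi(-\tau) \ge p\tau - 1$, bringing back $(\tau-1)^p$ to recover the $p$-th-power term once $\tau$ is large. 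Stitching these ranges together so that the $\tfrac{p}{8}\tau^2$ and $2^{-(p+1)}\tau^p$ terms are both certified simultaneously with the stated constants — the degeneracy at the sign change is what rules out a single uniform estimate — is where the real bookkeeping lies.
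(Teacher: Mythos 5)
Your reduction---separability across coordinates, then homogeneity and sign normalization down to the scalar inequality for $\phi(t)=|1+t|^p-1-pt$---is exactly the paper's opening move (the paper substitutes $\Delta=\alpha x$ to the same effect). From there your technique genuinely differs: you run everything through the integral remainder $\phi(t)=p(p-1)\int_0^t(t-s)|1+s|^{p-2}\,ds$ plus range splitting, whereas the paper never uses the remainder form; it introduces explicit deficit functions such as $h(\alpha)=|1+\alpha|^p-1-p\alpha-\tfrac{p}{8}\alpha^2-2^{-(p+1)}|\alpha|^p$ and argues purely via derivative signs ($h''\ge 0$ on $[1,\infty)$ so $h\ge h(1)\ge 0$; $h'\le 0$ on $(-\infty,-1]$ so $h\ge h(-1)\ge 0$; a sign analysis of $h'$ on $[-1,1]$), so that only endpoint evaluations are ever needed. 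Your upper bound is complete and arguably cleaner than the paper's (the observation $|1+s|^{p-2}<e$ when $(p-1)|t|<1$ replaces a three-case analysis; just note that for $t<0$ the ``triangle inequality'' step leaves an extra $p|t|-1$, harmlessly absorbed by the $2p^2t^2$ term since $|t|\ge\tfrac{1}{p-1}\ge\tfrac{1}{2p}$), and the same-sign lower bound goes through with the stated constants.

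The gap is where you flagged it, but it is more than bookkeeping: the two regimes you describe for $t=-\tau<0$ do not meet. The remainder yields a useful \emph{lower} bound only while $(1-s)^{p-2}$ is bounded below, i.e.\ for $\tau=O(1/p)$; for any fixed $\tau\in(0,1)$ and large $p$ the factor $(1-\tau)^{p-2}$ is exponentially small, so ``for small $\tau$ the remainder still gives $\phi\gtrsim p(p-1)\tau^2$'' fails on, say, $\tau\in[\tfrac12,1)$. Meanwhile you invoke $\phi(-\tau)\ge p\tau-1$ only ``near and above $1$,'' and it cannot hold much below that cutoff uniformly (at $p=2$, $\tau=\tfrac12$ it reads $0\ge\tfrac{3}{32}$). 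As written, the window $\tau\in[c/p,\,1)$ is uncovered. The repair is to prove the linear bound on all of $[\tfrac{2}{p},2]$: for $\tau\le 2$ one has $\tfrac{p}{8}\tau^2\le\tfrac{p}{4}\tau$ and $2^{-(p+1)}\tau^p\le\tfrac12$, hence $p\tau-1-\tfrac{p}{8}\tau^2-2^{-(p+1)}\tau^p\ge\tfrac{3p}{4}\tau-\tfrac32\ge 0$; and on $[0,\tfrac{2}{p}]$ to check $(p-1)\left(1-\tfrac{2}{p}\right)^{p-2}\ge\tfrac{p+1}{4p}$ so the remainder suffices there (with $\tau\ge 2$ handled by $(\tau-1)^p\ge 2^{-p}\tau^p$). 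Only after pinning these crossover points is the case closed. This is precisely the degeneracy the paper's derivative-sign formulation sidesteps: it never needs a pointwise quadratic-plus-$p$th-power bound inside the degenerate region, only $h(0)=0$ and $h(-1)=\tfrac{7p}{8}-1-2^{-(p+1)}\ge 0$.
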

The proof of the above lemma is long and hence deferred to the end of this section.
Applying the above lemma on our objective we get,
\begin{equation}\label{eq:precondition}
\frac{p}{8} (\AA\Delta)^{\top}\RR\AA\Delta + \frac{1}{2^{p+1}}\norm{\AA\Delta}_p^p \leq \norm{\AA(\xx+\Delta)-\bb}^p_p - \norm{\AA\xx-\bb}_p^p - \gg^{\top}\AA\Delta \leq 2 p^2 (\AA\Delta)^{\top}\RR\AA\Delta + p^p \norm{\AA\Delta}_p^p,
\end{equation}
where $\RR$ is the diagonal matrix with entries $|\AA\xx-\bb|^{p-2}$ and $\gg = p\RR(\AA\xx-\bb)$. We next show the relation between the residual problem defined in the preliminaries and the change in objective value when $\xx$ is updated by $\Delta$.
\begin{lemma}
\label{lem:RelateResidualOpt}
For any $\xx,\Delta$ and $p \geq 2$ and $\lambda = 16p$,
\[
\residual(\Delta) \leq  \norm{\AA\xx - \bb}_p^p  -\norm{\AA(\xx-\Delta) - \bb}_p^p,
\]
and 
\[
\norm{\AA\xx - \bb}_p^p  -\norm{\AA(\xx-\lambda\Delta) - \bb}_p^p \leq \lambda \residual(\Delta).
\]
\end{lemma}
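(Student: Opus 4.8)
The plan is to derive both inequalities directly from the two-sided estimate~\eqref{eq:precondition}, which sandwiches the change in objective $\norm{\AA(\xx+\Delta)-\bb}_p^p - \norm{\AA\xx-\bb}_p^p - \gg^\top\AA\Delta$ between a quadratic-plus-$\ell_p$ lower bound and an upper bound with the \emph{same} shape but larger constants. The crucial structural fact is that both $(\AA\Delta)^\top\RR\AA\Delta$ and $\norm{\AA\Delta}_p^p$ are even in $\Delta$ and scale as $\lambda^2$ and $\lambda^p$ respectively under $\Delta \mapsto \lambda\Delta$. So I will simply apply~\eqref{eq:precondition} with the increment set to $-\Delta$ for the first bound and to $-\lambda\Delta$ for the second, and then match coefficients against the definition of $\residual(\Delta)$.

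For the first inequality, I apply the upper-bound half of~\eqref{eq:precondition} with increment $-\Delta$, which (using evenness of the two nonlinear terms) gives
\[
\norm{\AA(\xx-\Delta)-\bb}_p^p - \norm{\AA\xx-\bb}_p^p + \gg^\top\AA\Delta \le 2p^2 (\AA\Delta)^\top\RR\AA\Delta + p^p\norm{\AA\Delta}_p^p.
\]
Rearranging isolates $\norm{\AA\xx-\bb}_p^p - \norm{\AA(\xx-\Delta)-\bb}_p^p$ on one side, and what remains on the other side is exactly $\gg^\top\AA\Delta - 2p^2(\AA\Delta)^\top\RR\AA\Delta - p^p\norm{\AA\Delta}_p^p = \residual(\Delta)$. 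Note this leaves no slack: the residual problem is defined with precisely the constants $2p^2$ and $p^p$ of the upper bound, which is what makes this half immediate.

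For the second inequality, I apply the lower-bound half of~\eqref{eq:precondition} with increment $-\lambda\Delta$, using the $\lambda^2$/$\lambda^p$ scalings to obtain
\[
\tfrac{p}{8}\lambda^2 (\AA\Delta)^\top\RR\AA\Delta + \tfrac{\lambda^p}{2^{p+1}}\norm{\AA\Delta}_p^p \le \norm{\AA(\xx-\lambda\Delta)-\bb}_p^p - \norm{\AA\xx-\bb}_p^p + \lambda\gg^\top\AA\Delta.
\]
Rearranging bounds the decrease $\norm{\AA\xx-\bb}_p^p - \norm{\AA(\xx-\lambda\Delta)-\bb}_p^p$ above by $\lambda\gg^\top\AA\Delta - \tfrac{p}{8}\lambda^2(\AA\Delta)^\top\RR\AA\Delta - \tfrac{\lambda^p}{2^{p+1}}\norm{\AA\Delta}_p^p$. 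Since $\lambda\,\residual(\Delta) = \lambda\gg^\top\AA\Delta - 2\lambda p^2(\AA\Delta)^\top\RR\AA\Delta - \lambda p^p\norm{\AA\Delta}_p^p$, the linear terms $\lambda\gg^\top\AA\Delta$ cancel, and the claim reduces to the two coefficient comparisons $\tfrac{p}{8}\lambda^2 \ge 2\lambda p^2$ and $\tfrac{\lambda^p}{2^{p+1}} \ge \lambda p^p$.

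The only genuine work is verifying these two coefficient inequalities for the chosen $\lambda = 16p$ and every $p \ge 2$. The quadratic one simplifies to $\lambda \ge 16p$, which holds with equality and pins down the constant. The $\ell_p$ one simplifies to $\lambda^{p-1} \ge 2^{p+1}p^p$; substituting $\lambda = 16p = 2^4 p$ and dividing by $p^{p-1}$ reduces it to $2^{4(p-1)} \ge 2^{p+1}p$, i.e.\ $3p - 5 \ge \log_2 p$, which holds with equality at $p = 2$ and strictly for $p > 2$ because the left side grows linearly while the right grows logarithmically. I expect this $\ell_p$-coefficient check to be the one tight point of the argument, and the place where both the specific constant $16$ and the hypothesis $p \ge 2$ are actually needed.
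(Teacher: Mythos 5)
Your proof is correct and takes essentially the same route as the paper's: both derive the first inequality by applying the upper-bound half of \eqref{eq:precondition} with increment $-\Delta$ (using evenness of the quadratic and $\ell_p$ terms), and the second by applying the lower-bound half with increment $-\lambda\Delta$ and then comparing coefficients, which reduces to $\frac{p}{8}\lambda^2 \ge 2\lambda p^2$ and $\frac{\lambda^p}{2^{p+1}} \ge \lambda p^p$ for $\lambda = 16p$. The only difference is presentational: the paper leaves these two coefficient checks implicit in its final inequality chain, while you verify them explicitly and correctly note that both are tight at $p=2$.
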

\begin{proof}
The first inequality directly follows from \eqref{eq:precondition}. For the second inequality,
\begin{align*}
\norm{\AA\xx - \bb}_p^p  -\norm{\AA(\xx-\lambda\Delta) - \bb}_p^p & \leq  \lambda \gg^{\top}\Delta - \lambda^2\frac{p}{8} \Delta^{\top}\AA^{\top}\RR\AA\Delta  - \lambda^p\frac{1}{2^{p+1}}\norm{\AA\Delta}_p^p\\
& = \lambda \left(\gg^{\top}\Delta - \lambda \frac{p}{8} \Delta^{\top}\AA^{\top}\RR\AA\Delta  - \lambda^{p-1}\frac{1}{2^{p+1}} \norm{\AA\Delta}_p^p\right)\\
& \leq \lambda \left(\gg^{\top}\AA\Delta -2p^2 \Delta^{\top}\AA^{\top}\RR\AA\Delta  - p^p\norm{\AA\Delta}_p^p \right).
\end{align*}
\end{proof}

\subsubsection{Proof of Lemma Iterative Refinement}
\begin{proof}
Let $\Dtil$ be a $\kappa$-approximate solution to the residual problem. Using this fact and Lemma \ref{lem:RelateResidualOpt} for $\Delta = \frac{\xx - \xx^{\star}}{\lambda}$, we get,
\[
\residual(\Dtil) \geq \frac{1}{\kappa}\residual(\Dopt) \geq \frac{1}{\kappa}\residual\left(\frac{\xx - \xx^{\star}}{\lambda}\right) \geq  \frac{1}{\lambda \kappa}\left(\norm{\AA\xx-\bb}_p^p - OPT\right).
\]
Also,
\begin{align*}
\norm{\AA(\xx-\Dtil)-\bb}_p^p - OPT &\leq \norm{\AA\xx-\bb}_p^p - \residual(\Dtil) - OPT\\
& \leq  \left(\norm{\AA\xx-\bb}_p^p - OPT\right) -  \frac{1}{\lambda \kappa}\left(\norm{\AA\xx-\bb}_p^p -OPT\right)\\
& = \left(1 -  \frac{1}{\lambda \kappa}\right) \left(\norm{\AA\xx-\bb}_p^p - OPT\right).
\end{align*}
Now, after $t$ iterations,
\[
\norm{\AA(\xx^{(t)}-\Dtil)-\bb}_p^p - OPT \leq \left(1 -  \frac{1}{\lambda \kappa}\right)^t \left(\norm{\AA\xx^{(0)}-\bb}_p^p - OPT\right) \leq \left(1 -  \frac{1}{\lambda \kappa}\right)^t m^{(p-2)/2} OPT.
\]
Thus, for our value of $\lambda=16p,$  $8p^2 \kappa \log (m/\epsilon)$
iterations suffice to obtain a
$(1+\eps)$-approximate solution.
\end{proof}

\subsubsection{Proof of Lemma \ref{lem:precondition}}
\begin{proof}
To show this, we show that the above holds for all coordinates. For a single coordinate, the above expression is equivalent to proving,
\[
\frac{p}{8} |x|^{p-2} \Delta^2 + \frac{1}{2^{p+1}}\abs{\Delta}^p \leq \abs{\xx+\Delta}^p - \abs{\xx}^p - p\abs{x}^{p-1}sgn(x)\Delta \leq 2 p^2 |x|^{p-2} \Delta^2  + p^p \abs{\Delta}^p.
\]
Let $\Delta = \alpha x$. Since the above clearly holds for $x=0$, it remains to show for all $\alpha$,
\[
\frac{p}{8}\alpha^2 + \frac{1}{2^{p+1}}\abs{\alpha}^p \leq \abs{1+\alpha}^p - 1 - p\alpha \leq 2 p^2 \alpha^2  + p^p \abs{\alpha}^p.
\]
\begin{enumerate}
\item $\alpha \geq 1$:\\
\noindent In this case, $1+\alpha \leq 2 \alpha \leq p\cdot \alpha$. So, $ \abs{1+\alpha}^p \leq p^p \abs{\alpha}^p$ and the right inequality directly holds. To show the other side, let 
\[
h(\alpha) = (1+\alpha)^p - 1 - p\alpha - \frac{p}{8} \alpha^2 - \frac{1}{2^{p+1}}{\alpha}^p.
\]
We have,
\[
h'(\alpha) = p(1+\alpha)^{p-1}  - p- \frac{p}{4} \alpha - \frac{p}{2^{p+1}}{\alpha}^{p-1}
\]
and 
\[
h''(\alpha) = p(p-1)(1+\alpha)^{p-2}  - \frac{p}{4}  - \frac{p(p-1)}{2^{p+1}}{\alpha}^{p-2} \geq 0.
\]
Since $h''(\alpha) \geq 0$, $h'(\alpha) \geq h'(1) \geq 0$. So $h$ is an increasing function in $\alpha$ and $h(\alpha) \geq h(1) \geq 0$.

\item $\alpha \leq -1$:\\
Now, $\abs{1+\alpha} \leq 1+\abs{\alpha} \leq p\cdot \abs{\alpha}$, and $2\alpha^2 p^2 - \abs{\alpha} p \geq 0$. As a result,
\[
\abs{1+\alpha}^p \leq - \abs{\alpha} p +2\alpha^2 p^2  + p^p\cdot \abs{\alpha}^p
\]
which gives the right inequality. Consider,
\[
h(\alpha) = |1+\alpha|^p - 1 - p\alpha - \frac{p}{8} \alpha^2 - \frac{1}{2^{p+1}}|\alpha|^p.
\]
\[
h'(\alpha) = - p|1+\alpha|^{p-1}  - p - \frac{p}{4} \alpha + p\frac{1}{2^{p+1}}|\alpha|^{p-1}.
\]
Let $\beta = -\alpha$. The above expression now becomes,
\[
- p (\beta - 1)^{p-1} - p + \frac{p}{4} \beta + p\frac{1}{2^{p+1}}\beta^{p-1}.
\]
We know that $\beta \geq 1$. When $\beta \geq 2$, $\frac{\beta}{2} \leq \beta - 1$ and $\frac{\beta}{2} \leq \left(\frac{\beta}{2}\right)^{p-1}$. This gives us,
\[
\frac{p}{4} \beta + p\frac{1}{2^{p+1}}\beta^{p-1} \leq \frac{p}{2} \left(\frac{\beta}{2}\right)^{p-1} +  \frac{p}{2} \left(\frac{\beta}{2}\right)^{p-1} \leq p (\beta - 1)^{p-1}
\]
giving us $h'(\alpha) \leq 0$ for $\alpha \leq -2$. When $\beta \leq 2$, $\frac{\beta}{2} \geq \left(\frac{\beta}{2}\right)^{p-1}$ and $\frac{\beta}{2}  \leq 1$.
\[
\frac{p}{4} \beta + p\frac{1}{2^{p+1}}\beta^{p-1} \leq \frac{p}{2}\cdot  \frac{\beta}{2} + \frac{p}{2}\cdot  \frac{\beta}{2} \leq p
\]
giving us $h'(\alpha) \leq 0$ for $ -2 \leq \alpha \leq -1$. Therefore, $h'(\alpha) \leq 0$ giving us, $h(\alpha) \geq h(-1) \geq 0$, thus giving the left inequality.

\item $\abs{\alpha} \leq 1$:\\
Let $s(\alpha) =   1 + p\alpha + 2p^2 \alpha^2 +  p^p \abs{\alpha}^p - (1+\alpha)^p.$ Now,
\[
s'(\alpha)  = p + 4 p^2 \alpha +  p^{p+1} \abs{\alpha}^{p-1} sgn(\alpha) - p (1+\alpha)^{p-1}.
\]
When $\alpha \leq 0$, we have,
\[
s'(\alpha)  = p + 4 p^2 \alpha - p^{p+1} \abs{\alpha}^{p-1} - p (1+\alpha)^{p-1}.
\]
and 
\[
s''(\alpha)  = 4 p^2 +  p^{p+1}(p-1) \abs{\alpha}^{p-2} - p (p-1)(1+\alpha)^{p-1} \geq 2 p^2 +  p^{p+1}(p-1) \abs{\alpha}^{p-2} - p (p-1) \geq 0 .
\]
So $s'$ is an increasing function of $\alpha$ which gives us, $s'(\alpha) \leq s'(0) = 0$. Therefore $s$ is a decreasing function, and the minimum is at $0$ which is $0$. This gives us our required inequality for $\alpha \leq 0$.
When $\alpha \geq \frac{1}{p-1}$, $1 + \alpha \leq p \cdot \alpha$ and $s'(\alpha) \geq 0$. We are left with the range $0 \leq \alpha \leq \frac{1}{p-1}$. Again, we have,
\begin{align*}
s''(\alpha)  & = 4 p^2 +  p^{p+1}(p-1) \abs{\alpha}^{p-2} - p (p-1)(1+\alpha)^{p-1} \\
& \geq 4 p^2 +  p^{p+1}(p-1) \abs{\alpha}^{p-2} - p (p-1) (1+\frac{1}{p-1})^{p-1}\\
&\geq 4 p^2 +  p^{p+1}(p-1) \abs{\alpha}^{p-2} - p (p-1) e, \text{When $p$ gets large the last term approaches $e$}\\
& \geq 0.
\end{align*}
Therefore, $s'$ is an increasing function, $s'(\alpha) \geq s'(0) = 0$. This implies $s$ is an increasing function, giving, $s(\alpha) \geq s(0)=0$ as required. 

To show the other direction,
\[
h(\alpha) = (1+\alpha)^p - 1 - p\alpha - \frac{p}{8} \alpha^2 - \frac{1}{2^{p+1}}\abs{\alpha}^p \geq (1+\alpha)^p - 1 - p\alpha - \frac{p}{8} \alpha^2 - \frac{p}{8}{\alpha}^2 = (1+\alpha)^p - 1 - p\alpha - \frac{p}{4} \alpha^2.
\]
Now, since $p \geq 2$,
\begin{align*}
&\left((1+\alpha)^{p-2} - 1 \right)sgn(\alpha) \geq 0\\
\Rightarrow &\left((1+\alpha)^{p-1} - 1 - \alpha \right)sgn(\alpha) \geq 0\\
\Rightarrow &\left(p(1+\alpha)^{p-1} - p - \frac{p}{2}\alpha \right)sgn(\alpha) \geq 0
\end{align*}
We thus have, $h'(\alpha) \geq 0$ when $\alpha$ is positive and $h'(\alpha) \leq 0$ when $\alpha$ is negative. The minimum of $h$ is at $0$ which is $0$. This concludes the proof of this case.
\end{enumerate}
\end{proof}

\subsection{Proof of Lemma that Checks Progress in Objective}

We will next prove the following Lemma which shows that we do not change $i$ when we have the correct value of $i$.

\begin{restatable}{lemma}{Progress}(Check Progress).
\label{lem:CheckProgress}
Let $\alpha_0$ be as defined in line \eqref{alg:line:DefineAlpha} of Algorithm \ref{alg:CheckProgress} and $\Dtil$ the solution of program \eqref{eq:ProblemSolve}. If $ i/2 <  \frac{(\|\AA\xx^{(t)}-\bb\|_p^p -  \|\AA\xx^{\star}-b\|_p^p)}{16p} \leq i$, then $\residual(\alpha_0 \cdot \Dtil) \geq \frac{\alpha_0 i}{4}$ and $(\AA\Dtil)^{\top}(\RR+s\II)\AA\Dtil \leq \lambda i /p^2$.
\end{restatable}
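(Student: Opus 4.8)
The plan is to prove the two conclusions in the opposite order to which they are stated, since the residual lower bound will fall out of the quadratic upper bound together with the definitions of $\alpha_0$ and $k$. So I would first show $\Dtil^{\top}\AA^{\top}(\RR+s\II)\AA\Dtil \le \lambda i/p^2$ and only then deduce $\residual(\alpha_0\Dtil)\ge \alpha_0 i/4$. For the quadratic bound, the key observation is that $\Dtil$ minimizes $\Delta^{\top}\AA^{\top}(\RR+s\II)\AA\Delta$ over the feasible set $\{\gg^{\top}\AA\Delta = i/2,\ \CC\Delta = 0\}$, so its value is bounded by that of any competing feasible direction. I would use the scaled error direction $c\,(\xx^{(t)}-\xx^{\star})$ with $c = (i/2)/(\gg^{\top}\AA(\xx^{(t)}-\xx^{\star}))$: this is feasible because $\CC(\xx^{(t)}-\xx^{\star}) = \dd-\dd = 0$ and the scaling makes the gradient constraint hold with equality. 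Writing $\delta = \xx^{(t)}-\xx^{\star}$ and $G = \norm{\AA\xx^{(t)}-\bb}_p^p - \opt$, the hypothesis $i/2 < G/(16p) \le i$ gives $G > 8pi$. Applying the lower bound of Lemma~\ref{lem:precondition} (equation~\eqref{eq:precondition}) with increment $-\delta$, so that $\xx^{(t)}-\delta = \xx^{\star}$, yields simultaneously $\gg^{\top}\AA\delta \ge G$, $\delta^{\top}\AA^{\top}\RR\AA\delta \le \tfrac{8}{p}\gg^{\top}\AA\delta$, and $\norm{\AA\delta}_p^p \le 2^{p+1}\gg^{\top}\AA\delta$.

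With these three facts I would bound the two pieces of $c^2\,\delta^{\top}\AA^{\top}(\RR+s\II)\AA\delta$ separately. The $\RR$-piece is at most $\tfrac{i^2}{4(\gg^{\top}\AA\delta)^2}\cdot\tfrac{8}{p}\gg^{\top}\AA\delta \le i/(4p^2)$, using $\gg^{\top}\AA\delta \ge G > 8pi$. For the padding piece I would convert norms via $\norm{\AA\delta}_2^2 \le m^{1-2/p}\norm{\AA\delta}_p^2$ (valid for $p\ge 2$); the choice $s = \tfrac12 (i/m)^{(p-2)/p}$ is tuned precisely so that the factor $m^{1-2/p}=m^{(p-2)/p}$ cancels, leaving $\tfrac12 i^{(p-2)/p}\norm{\AA\delta}_p^2$, and substituting $\norm{\AA\delta}_p^2 \le (2^{p+1}\gg^{\top}\AA\delta)^{2/p} \le 8(\gg^{\top}\AA\delta)^{2/p}$ together with $\gg^{\top}\AA\delta > 8pi$ bounds this piece by $\tfrac{i}{(8p)^{(2p-2)/p}}\le i/(8p)$. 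Summing gives $\Dtil^{\top}\AA^{\top}(\RR+s\II)\AA\Dtil \le i/(4p)$, comfortably within $\lambda i/p^2 = 16i/p$, which is the second conclusion.

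For the residual bound I would expand $\residual(\alpha_0\Dtil) = \alpha_0\gg^{\top}\AA\Dtil - 2p^2\alpha_0^2 Q - p^p\alpha_0^p\norm{\AA\Dtil}_p^p$ with $Q = \Dtil^{\top}\AA^{\top}\RR\AA\Dtil$, use the constraint $\gg^{\top}\AA\Dtil = i/2$ for the first term, and rewrite $p^p\norm{\AA\Dtil}_p^p = 2p^2 kQ$ from the definition of $k$. The two negative terms then collapse to $2p^2 Q(\alpha_0 + k\alpha_0^{p-1})$, and the two branches of the minimum defining $\alpha_0$ give $\alpha_0 \le 1/(16\lambda)$ and $k\alpha_0^{p-1} \le 1/(16\lambda)$, so this quantity is at most $\tfrac{p^2 Q}{4\lambda} = \tfrac{pQ}{64}$. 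Since $Q \le \Dtil^{\top}\AA^{\top}(\RR+s\II)\AA\Dtil \le 16i/p$ by the first part, this is at most $i/4$, giving exactly $\residual(\alpha_0\Dtil) \ge \alpha_0 i/2 - \alpha_0 i/4 = \alpha_0 i/4$.

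I expect the quadratic bound to be the main obstacle: it requires carefully tracking the exponents of $i$, $m$, $p$, and $\gg^{\top}\AA\delta$ through the $\ell_2$--$\ell_p$ conversion, and verifying that the definition of $s$ makes the $m$-dependence cancel exactly while the $c^2$ scaling tames the a priori large quantity $\gg^{\top}\AA\delta$. I would also handle the degenerate case $Q=0$ (where $k$ and $\alpha_0$ are ill-defined) separately, noting that feasibility forces $\AA\Dtil \ne 0$ and hence that the padded form $\Dtil^{\top}\AA^{\top}(\RR+s\II)\AA\Dtil$ is strictly positive, so the argument can proceed with the padded quadratic in place of $Q$.
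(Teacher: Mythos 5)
Your proof is correct, but for the harder half -- the bound $\Dtil^{\top}\AA^{\top}(\RR+s\II)\AA\Dtil \le \lambda i/p^2$ -- you take a genuinely different route than the paper. The paper argues in three modular steps: Lemma~\ref{lem:RelateResidualOpt} places the optimum of the residual problem in $(i/2,\lambda i]$; Lemma~\ref{lem:OPT2Decision} then uses the first-order optimality condition of the residual optimizer (differentiating along the ray through $\Dopt$) to show that the decision problem \eqref{eq:Decision} has value at most $\lambda i$; and Claim~\ref{cl:RelateObj} transfers that bound to program \eqref{eq:ProblemSolve} by plugging the decision optimizer into the padded quadratic, using the same $\ell_2$--$\ell_p$ conversion and the cancellation built into $s$ that you exploit. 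You bypass the residual optimum and the decision problem entirely: you feed the scaled true-error direction $c\,(\xx^{(t)}-\xx^{\star})$ directly into \eqref{eq:ProblemSolve} as a competitor, and control it with three consequences of Lemma~\ref{lem:precondition} evaluated at $\xx^{\star}$ ($\gg^{\top}\AA\delta \ge G$, $\delta^{\top}\AA^{\top}\RR\AA\delta \le \tfrac{8}{p}\gg^{\top}\AA\delta$, and $\norm{\AA\delta}_p^p \le 2^{p+1}\gg^{\top}\AA\delta$, where $\delta = \xx^{(t)}-\xx^{\star}$ and $G$ is the optimality gap). Your route is shorter, avoids two auxiliary lemmas, and gives the sharper constant $i/(4p)$ in place of $16i/p$; what the paper's detour buys is modularity -- the residual-optimum bracketing and the decision-problem bound are the natural currency of the iterative-refinement framework, and Lemma~\ref{lem:RelateResidualOpt} is reused elsewhere -- whereas your argument hard-wires $\xx^{\star}$ as the competitor. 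Both routes implicitly require the invariant $\CC\xx^{(t)}=\dd$ for the competitor to be feasible for the residual/quadratic program. Your second half, the lower bound $\residual(\alpha_0\Dtil) \ge \alpha_0 i/4$, is essentially identical to the paper's computation.

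Two small blemishes, neither fatal. First, the algorithm computes $k$ inside the procedure after being handed $\RR+s\II$, so the correct identity is $p^p\norm{\AA\Dtil}_p^p = 2p^2 k\,\Dtil^{\top}\AA^{\top}(\RR+s\II)\AA\Dtil$, not $2p^2 kQ$ with the unpadded $Q$; since $Q \le \Dtil^{\top}\AA^{\top}(\RR+s\II)\AA\Dtil$ your chain of inequalities still closes (this is exactly how the paper writes it), but you should state $k$ with the padded form so that your $\alpha_0$ is the algorithm's $\alpha_0$. Second, writing that the negative terms ``collapse to $2p^2 Q(\alpha_0 + k\alpha_0^{p-1})$'' drops a factor of $\alpha_0$: the collapsed expression is $\alpha_0 \cdot 2p^2 Q_s(\alpha_0 + k\alpha_0^{p-1})$, so your intermediate quantity $pQ/64 \le i/4$ bounds the negative terms divided by $\alpha_0$. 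Your final line $\residual(\alpha_0\Dtil) \ge \alpha_0 i/2 - \alpha_0 i/4$ restores the factor, so this is only a typographical slip, not an error.
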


We require bounding the objective of program \ref{eq:ProblemSolve}. To do that we first give a bound on a decision version of the residual problem, and then relate this problem with problem \ref{eq:ProblemSolve}.
\begin{lemma}
\label{lem:OPT2Decision}
Let $i$ be such that the optimum of the residual problem, $\residual(\Dopt) \in (i/2,\lambda i]$. Then the following problem has optimum at most $\lambda i$.
\begin{align}
\label{eq:Decision}
\begin{aligned}
\min_{\Delta \in \mathbb{R}^m} \quad& 2p^2 (\AA\Delta)^{\top}\RR\AA\Delta +p^p\norm{\AA\Delta}_p^p\\
& \gg^{\top}\AA \Delta = i/2\\
& \CC\Delta = 0.
\end{aligned}
\end{align}
\end{lemma}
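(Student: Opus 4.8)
The plan is to prove the bound by exhibiting a single feasible point for the minimization \eqref{eq:Decision} whose objective value is at most $\lambda i$; since \eqref{eq:Decision} is a minimization, any feasible point upper-bounds its optimum, so one such point suffices. The natural candidate is a suitable rescaling of the residual optimizer $\Dopt$.

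First I would record the elementary identity linking the two problems. Writing $Q(\Delta) = 2p^2 (\AA\Delta)^{\top}\RR\AA\Delta + p^p\norm{\AA\Delta}_p^p$ for the objective of \eqref{eq:Decision}, Definition~\ref{def:residual} gives $\residual(\Delta) = \gg^{\top}\AA\Delta - Q(\Delta)$. Since $Q(\Dopt) \ge 0$ and the hypothesis gives $\residual(\Dopt) > i/2$, it follows that $\gg^{\top}\AA\Dopt = \residual(\Dopt) + Q(\Dopt) > i/2 > 0$; in particular this forces $i>0$.

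Next I would rescale to meet the linear constraint. Set $\theta = \frac{i/2}{\gg^{\top}\AA\Dopt}$, which by the previous step lies in $(0,1)$, and take $\Delta = \theta\,\Dopt$. This is feasible: by linearity $\gg^{\top}\AA\Delta = \theta\,\gg^{\top}\AA\Dopt = i/2$ and $\CC\Delta = \theta\,\CC\Dopt = 0$. It remains to bound $Q(\theta\Dopt)$. Because $p \ge 2$ and $\theta \in (0,1)$ we have both $\theta^2 \le \theta$ and $\theta^p \le \theta$, hence $Q(\theta\Dopt) \le \theta\,Q(\Dopt)$. Substituting $Q(\Dopt) = \gg^{\top}\AA\Dopt - \residual(\Dopt)$ and using $\residual(\Dopt) > 0$ yields $Q(\theta\Dopt) \le \theta\big(\gg^{\top}\AA\Dopt - \residual(\Dopt)\big) \le \theta\,\gg^{\top}\AA\Dopt = i/2 \le \lambda i$, since $\lambda = 16p \ge 1$. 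This certifies the claimed bound on the optimum of \eqref{eq:Decision} (indeed a stronger bound of $i/2$).

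The only genuinely load-bearing point — and the step I would be most careful about — is that we must scale $\Dopt$ \emph{down} rather than up, i.e.\ $\theta \le 1$. This is exactly where the lower endpoint of the hypothesis $\residual(\Dopt) \in (i/2,\lambda i]$ is used: combined with the nonnegativity of the quadratic-plus-$p$-norm term $Q$, it forces $\gg^{\top}\AA\Dopt \ge i/2$. Were the scaling in the opposite direction, the inequalities $\theta^2,\theta^p \le \theta$ would reverse and the clean bound would fail. Note that the upper endpoint $\residual(\Dopt) \le \lambda i$ is not actually needed for this particular inequality, so the whole argument is short once the correct scaling direction is pinned down.
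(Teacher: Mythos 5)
Your proof is correct, but it takes a genuinely different route from the paper's. The paper also rescales the residual optimizer by exactly your factor ($\delta = \frac{i}{2\gg^{\top}\AA\Dopt}$), and derives $\gg^{\top}\AA\Dopt \ge i/2$ the same way you do, but it bounds the objective differently: it first invokes the \emph{first-order optimality} of $\Dopt$ --- differentiating $\residual(t\,\Dopt)$ at $t=1$ and setting the derivative to zero --- to get $\residual(\Dopt) = 2p^2(\AA\Dopt)^{\top}\RR\AA\Dopt + (p-1)p^p\norm{\AA\Dopt}_p^p$, which for $p \ge 2$ dominates the objective $Q(\Dopt)$ of \eqref{eq:Decision}; combined with the upper endpoint $\residual(\Dopt)\le\lambda i$ this gives $Q(\Dopt) \le \lambda i$, and then the crude bound $\max\{\delta^2,\delta^p\}\le 1$ finishes. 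You avoid the stationarity argument entirely by using the sharper scaling inequality $\max\{\theta^2,\theta^p\}\le\theta$, which lets the unknown magnitude of $Q(\Dopt)$ cancel against the $\gg^{\top}\AA\Dopt$ in the denominator of $\theta$: $\theta\,Q(\Dopt) \le \theta\,\gg^{\top}\AA\Dopt = i/2$. What your route buys: it is shorter, it never uses that $\Dopt$ is optimal (only that it is feasible with residual value exceeding $i/2$), it yields the stronger bound $i/2$, and it does not need $p\ge 2$ (the paper's own margin note flags that its stationarity step fails for $p<2$). One small caveat: your closing remark that the upper endpoint $\residual(\Dopt)\le\lambda i$ is ``not needed'' overstates things slightly --- within the lemma as stated, $i>0$ (which you need for $\theta\in(0,1)$ and for $i/2\le\lambda i$) follows precisely from the nonemptiness of the interval $(i/2,\lambda i]$, i.e.\ from having \emph{both} endpoints; your parenthetical ``this forces $i>0$'' should be justified that way rather than read off from $\gg^{\top}\AA\Dopt > i/2$.
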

\begin{proof}
  The assumption on the residual is
\[
\residual(\Dopt) = \gg^{\top}\AA \Dopt -
2p^2(\AA\Dopt)^{\top}\RR\AA\Dopt - p^p\norm{\AA\Dopt}_p^p \in  (i/2,\lambda i].
\]
Since the last $2$ terms are strictly non-positive, we must have,
$\gg^{\top}\AA \Dopt  \geq i/2.$
Since $\Dopt$ is the optimum and satisfies $\CC\Dopt = 0$, 
\[
\frac{d}{d\lambda}\left(\gg^{\top}\lambda \AA\Dopt - 2p^2 \lambda^2(\AA\Dopt)^{\top}\RR\AA\Dopt - \lambda^p p^p\norm{\AA\Dopt}_p^p \right)_{\lambda = 1} =0.
\]
Thus,
\[
\gg^{\top}\AA \Dopt -  2 p^2(\AA\Dopt)^{\top}\RR\AA\Dopt  -  p^{p}\norm{\AA\Dopt}_p^p  = 2 p^2(\AA\Dopt)^{\top}\RR\AA\Dopt  + (p-1) p^{p}\norm{\AA\Dopt}_p^p.
\]
\deeksha{Does not hold for $p<2$\\}
Since $p \ge 2,$ we get the following
\[
2 p^2(\AA\Dopt)^{\top}\RR\AA\Dopt  +  p^{p}\norm{\AA\Dopt}_p^p \leq \gg^{\top}\AA \Dopt -  2 p^2(\AA\Dopt)^{\top}\RR\AA\Dopt  -  p^{p}\norm{\AA\Dopt}_p^p \leq \lambda i.
\]
For notational convenience, let function $ h_p(\rr,\Delta) = 2 p^2(\AA\Delta)^{\top}\RR\AA\Delta  +  p^{p}\norm{\AA\Delta}_p^p$.
Now, we know that, $\gg^{\top}\AA \Dopt  \geq i/2$ and $ \gg^{\top} \AA\Dopt - h_p(\rr,\Dopt) \leq \lambda i$. This gives, 
\[
i/2 \leq \gg^{\top} \AA\Dopt \leq h_p(\rr,\Dopt) + \lambda i \leq 2\lambda i.
\]
Let $\Delta = \delta \Dopt$, where $\delta = \frac{i}{2\gg^{\top}\AA\Dopt}$. Note that $\delta \in [1/4\lambda, 1]$. Now, $\gg^{\top}\AA\Delta = i/2$ and, 
\[
h_p(\rr,\Delta) \leq \max\{\delta^2,\delta^p\} h_p(\rr,\Dopt) \leq \lambda i.
\]
Note that this $\Delta$ satisfies the constraints of program \eqref{eq:Decision} and has an optimum at most $\lambda i$. So the optimum of the program must have an objective at most $\lambda i$.
\end{proof}

 \begin{claim}\label{cl:RelateObj}
 If the optimal objective of program \eqref{eq:Decision} is at most $Z$, then the optimum objective of program \eqref{eq:ProblemSolve} is at most $\frac{Z}{2p^2}+\frac{i^{(p-2)/p}Z^{2/p}}{2p^2}$.
 \end{claim}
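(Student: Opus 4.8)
The plan is to exploit the fact that programs \eqref{eq:Decision} and \eqref{eq:ProblemSolve} share exactly the same feasible set $\{\Delta : \gg^{\top}\AA\Delta = i/2,\ \CC\Delta = 0\}$. This means I can take the minimizer of \eqref{eq:Decision} and simply plug it into \eqref{eq:ProblemSolve} as a feasible (though not necessarily optimal) point, then bound its objective value there; since the optimum of \eqref{eq:ProblemSolve} can only be smaller, that bound transfers to the optimum. Concretely, let $\Delta^{\star}$ be the minimizer of \eqref{eq:Decision}, so by hypothesis $2p^2 (\AA\Delta^{\star})^{\top}\RR\AA\Delta^{\star} + p^p \norm{\AA\Delta^{\star}}_p^p \le Z$. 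Because both summands are non-negative, this one inequality splits into the two separate bounds $(\AA\Delta^{\star})^{\top}\RR\AA\Delta^{\star} \le \nfrac{Z}{2p^2}$ and $\norm{\AA\Delta^{\star}}_p \le \nfrac{Z^{1/p}}{p}$.

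Next I would write the objective of \eqref{eq:ProblemSolve} at this point as $\Delta^{\star\top}\AA^{\top}(\RR + s\II)\AA\Delta^{\star} = (\AA\Delta^{\star})^{\top}\RR\AA\Delta^{\star} + s\norm{\AA\Delta^{\star}}_2^2$ and bound it term by term. The first term is already controlled by $\nfrac{Z}{2p^2}$, which is precisely the first term of the claimed bound. For the second term I would convert between norms: since $p \ge 2$, the standard inequality $\norm{\vv}_2 \le m^{\nfrac12 - \nfrac1p}\norm{\vv}_p$ gives $\norm{\AA\Delta^{\star}}_2^2 \le m^{1 - \nfrac2p}\norm{\AA\Delta^{\star}}_p^2 \le m^{(p-2)/p}\,\nfrac{Z^{2/p}}{p^2}$.

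The point where everything fits together is the definition of the padding $s = \tfrac12 i^{(p-2)/p} m^{-(p-2)/p}$ from line \eqref{alg:line:DefineS}: multiplying the previous bound by $s$ cancels the $m^{(p-2)/p}$ factor exactly, yielding $s\norm{\AA\Delta^{\star}}_2^2 \le \nfrac{i^{(p-2)/p}Z^{2/p}}{2p^2}$, the second term of the claim. Summing the two bounds shows the objective of \eqref{eq:ProblemSolve} at $\Delta^{\star}$ is at most $\nfrac{Z}{2p^2} + \nfrac{i^{(p-2)/p}Z^{2/p}}{2p^2}$, and the claim follows.

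I do not expect a serious obstacle here, since this is a direct feasibility-transfer computation. The only things to get right are the bookkeeping in the norm conversion and confirming that the exponent $-(p-2)/p$ in $s$ matches the $m^{1-2/p}$ produced by the $\ell_2$-versus-$\ell_p$ comparison; these agree because $1 - \nfrac2p = \nfrac{p-2}{p}$, so the power of $m$ cancels cleanly. The conceptual content is simply that $s$ was tuned so that the smoothing term $s\norm{\AA\Delta}_2^2$ is dominated by the $\ell_p$ term $p^p\norm{\AA\Delta}_p^p$ appearing in \eqref{eq:Decision}, up to the stated factors.
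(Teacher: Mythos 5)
Your proposal is correct and follows essentially the same route as the paper's own proof: take the minimizer of \eqref{eq:Decision}, use non-negativity to split its objective into the $\RR$-quadratic and $\ell_p$ bounds, convert $\norm{\AA\Delta^{\star}}_p$ to $\norm{\AA\Delta^{\star}}_2$ via the $m^{(p-2)/p}$ norm-comparison factor, and transfer through feasibility in \eqref{eq:ProblemSolve}, with the definition of $s$ cancelling the $m^{(p-2)/p}$ factor exactly. Your write-up is in fact a cleaner, more explicit version of the paper's terse argument.
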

 \begin{proof}
 Let $\Dopt$ denote the optimizer of \eqref{eq:Decision} and $\Dtil$ be the optimizer of \eqref{eq:ProblemSolve}. Since the optimum objective of \eqref{eq:Decision} is at most $Z$, we have we have $\norm{\AA\Dopt}_p^p \leq \frac{Z}{p^p} $. This implies that $\Dopt^{\top}\AA^{\top} \AA\Dopt \leq \frac{Z^{2/p}}{p^2} m^{(p-2)/p}$.
Since $\Dopt$ is a feasible solution of \eqref{eq:ProblemSolve}, we have for our value of $s$,
\[
\Dtil^{\top}\AA^{\top}(\RR +  s^{(t)}\II) \AA\Dtil \leq \Dopt^{\top}\AA^{\top}(\RR +  s^{(t)}\II) \AA\Dopt \leq \frac{Z}{2p^2}+\frac{i^{(p-2)/p}Z^{2/p}}{2p^2}.
\]
\deeksha{For $p<2$, the constants change to $\lambda^{2/p}$. Do not
  need the poly(m) factor in $s$ anymore.}
 \end{proof}


\subsection*{Proof of Lemma \ref{lem:CheckProgress}}
\begin{proof}
Since, 
\[
 i/2 <  \frac{(\|\AA\xx^{(t)}-\bb\|_p^p -  \|\AA\xx^{\star}-b\|_p^p)}{16p} \leq i,
\] 
we know that the optimum of the residual problem lies between $(i/2,
\lambda i]$ (Lemma \ref{lem:RelateResidualOpt}). From Lemma~\ref{lem:OPT2Decision} the optimum of the problem \eqref{eq:Decision} is at most $\lambda i$. Now, from Claim \ref{cl:RelateObj}, we know that $(\AA
\Dtil)^{\top} (\RR+s\II) (\AA \Dtil) \leq \lambda i/p^2$. Also, note that $\alpha_0 + k \alpha_0^{p-1} \leq \frac{1}{8\lambda}$. Consider the following,
\begin{align*}
\residual(\alpha_0 \cdot \Dtil) = &\alpha_0 \gg^{\top}\AA\Dtil  - \alpha_0^22p^2 (\AA \Dtil)^{\top} \RR (\AA \Dtil)  - \alpha_0^p p^p\norm{\AA\Dtil}_p^p\\
& \geq \alpha_0 \gg^{\top}\AA\Dtil - (\alpha_0^2 + k \alpha_0^p) 2p^2 (\AA \Dtil)^{\top} (\RR+s\II) (\AA \Dtil)\\
& = \alpha_0 \left(\gg^{\top}\AA\Dtil - (\alpha_0 + k \alpha_0^{p-1}) 2p^2(\AA \Dtil)^{\top} (\RR+s\II) (\AA \Dtil)\right)\\
& \geq  \alpha_0\left(\gg^{\top}\AA\Dtil -  \frac{1}{8\lambda}2p^2 (\AA \Dtil)^{\top}(\RR+s\II) (\AA \Dtil)\right) \\
& \geq \alpha_0\left(\frac{i}{2}-  \frac{1}{8\lambda} 2 \lambda i \right)\\
& \geq \frac{\alpha_0}{4} i
\end{align*}
\deeksha{A version of this with slightly different $\alpha_0$ holds for $p<2$.}
\end{proof}

\subsection{Proof of Lemma \ref{lem:Invariant}}
\Invariant*
\begin{proof}
We use induction to show this. Initially we set, $i = \|\AA\xx^{(0)} -\bb\|_p^p/16p$. When the optimum is not $0$, this is greater than $(\|\AA\xx^{(0)} -\bb\|_p^p - \|\AA\xx^{\star} -\bb\|_p^p )/16p$. When the optimum is $0$, the initial solution (2-norm minimizer) will also give zero and we can stop our procedure. Therefore, the claim holds for $t = 1$. Suppose at iteration $t$ the claim holds. Since the objective is non-increasing, we know that,
\[
\|\AA\xx^{(t)} -\bb\|_p^p - \|\AA\xx^{\star} -\bb\|_p^p \geq \|\AA\xx^{(t+1)} -\bb\|_p^p - \|\AA\xx^{\star} -\bb\|_p^p.
\] 
Let $\Dtil$ denote the solution returned in iteration $t+1$. At iteration $t+1$, if $ (\|\AA\xx^{(t+1)} -\bb\|_p^p - \|\AA\xx^{\star} -\bb\|_p^p )/16p \in (i/2,i]$, from Lemma \ref{lem:CheckProgress}, we will always have $\residual(\alpha_0 \Dtil) \geq \alpha_0 i/4$ and $(\AA\Dtil)^{\top}(\RR+s\II)\AA\Dtil \leq \lambda i/p^2$.
So the algorithm does not reduce $i$ and as a result our claim holds for $t+1$. Otherwise, we know that $ (\|\AA\xx^{(t+1)} -\bb\|_p^p - \|\AA\xx^{\star} -\bb\|_p^p )/16p \leq i/2$ and the algorithm might reduce $i$ by half if either of the two conditions are true. However, the claim still holds.  Therefore, $i$ is always at least $ (\|\AA\xx^{(t+1)} -\bb\|_p^p - \|\AA\xx^{\star} -\bb\|_p^p )/16p $. 

We start with a solution $\xx^{(0)}$ that minimizes the $\ell_2$ norm. Therefore, the following holds,
\[
\|\AA\xx^{(0)}-\bb\|_p^p \leq \|\AA\xx^{\star}-\bb\|_p^p m^{(p-2)/2}.
\]
The value of $i$ is the minimum at termination. Therefore, it is sufficient to prove the above bound for the termination condition. Our condition gives us the following at termination (the left inequality holds because otherwise, we would have terminated in the previous iteration).
\[
i \geq \frac{\epsilon}{16p(1+\epsilon)} \|\AA\xx-\bb\|_p^p \geq  \frac{i}{2}.
\]
This implies,
\[
i \geq \frac{\epsilon}{16p(1+\epsilon)} \|\AA\xx^{\star}-\bb\|_p^p \geq \frac{\epsilon}{16p(1+\epsilon)} \|\AA\xx^{(0)}-\bb\|_p^p m^{-(p-2)/2}.
\]
We next prove the second claim. Initially our solution satisfies $\CC\xx^{(0)} = \dd$. Assuming the condition holds at iteration $t$, we show it for $t+1$. At every iteration we solve for $\Dtil$ under the constraint $\CC\Dtil= 0$. Our update rule, $\xx^{(t+1)} = \xx^{(t)} - \alpha \Dtil$ gives us,
\[
\CC\xx^{(t+1)} = \CC\xx^{(t)} - \alpha \CC\Dtil = \dd - \alpha \cdot 0 = \dd.
\]
\deeksha{the bound on $i_{min}$ has a $|p-2|$ in the power of $m$ for $p<2$. Everything else holds assuming other lemmas used in this proof hold.}
\end{proof}

\subsection{Proof of Lemma \ref{lem:Approximation}}
Our approximation depends on the quantity $\alpha_0$ which is defined in the algorithm. This depends on the value of $k$, the ratio of the $p$-norm term to the square term. Therefore, in order to bound the approximation, we first give a bound on $k$.
\begin{lemma}\label{lem:Boundk}
Let $\Dtil$ the optimum of \eqref{eq:ProblemSolve} and let $ k = \frac{p^p\norm{\AA\Dtil}_p^p}{2p^2\Dtil^{\top}\AA^{\top}(\RR+s\II) \AA \Dtil}$. If the optimum of \eqref{eq:ProblemSolve} is at most $\lambda i/p^2$, then $k$ is at most $(32pm)^{(p-2)/2}$ for $\lambda = 16p$. Let $\alpha_0 = \min\left\{\frac{1}{16\lambda}, \frac{1}{(16\lambda k)^{1/(p-1)}}\right\}$. Then $\alpha_0 \geq \Omega(p^{-1/2}m^{-\frac{(p-2)}{2(p-1)}})$ when $p \leq m$.
\end{lemma}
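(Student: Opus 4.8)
The plan is to establish the two assertions in order: first the bound $k \le (32pm)^{(p-2)/2}$, and then deduce the lower bound on $\alpha_0$ from it. The engine of the whole argument is the padding term $s\II$: it is exactly what allows the $\ell_p$ term $\norm{\AA\Dtil}_p^p$ appearing in the numerator of $k$ to be controlled by the quadratic form in the denominator, and this is what makes the ratio bounded (recall that without padding $\kappa$, and hence $k$, would be unbounded).

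For the bound on $k$, I would abbreviate $N = \norm{\AA\Dtil}_p^p$ and $Q = \Dtil^{\top}\AA^{\top}(\RR+s\II)\AA\Dtil$, so that $k = \tfrac{p^{p-2}N}{2Q}$, and the hypothesis is $Q \le \lambda i/p^2$. Two ingredients combine: since $p \ge 2$ we have $\norm{\AA\Dtil}_p \le \norm{\AA\Dtil}_2$, giving $N \le (\Dtil^{\top}\AA^{\top}\AA\Dtil)^{p/2}$; and since $\RR \succeq 0$ we have $\RR+s\II \succeq s\II$, hence $Q \ge s\,\Dtil^{\top}\AA^{\top}\AA\Dtil$. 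Together these yield $N \le (Q/s)^{p/2}$, so $k \le \tfrac{p^{p-2}}{2}\,s^{-p/2}Q^{(p-2)/2}$. Substituting $Q \le \lambda i/p^2 = 16i/p$ and the value $s = \tfrac12 i^{(p-2)/p}m^{-(p-2)/p}$ from line \eqref{alg:line:DefineS}, the powers of $i$ cancel and the remaining factors collapse to exactly $2^{5p/2-5}p^{(p-2)/2}m^{(p-2)/2} = (32pm)^{(p-2)/2}$.

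For the lower bound on $\alpha_0 = \min\{\tfrac{1}{16\lambda},(16\lambda k)^{-1/(p-1)}\}$ with $\lambda = 16p$, I would show each of the two terms is $\Omega(p^{-1/2}m^{-(p-2)/(2(p-1))})$. For the $k$-dependent term I plug in $k \le (32pm)^{(p-2)/2}$ and factor $(256pk)^{1/(p-1)} \le (256p)^{1/(p-1)}(32pm)^{(p-2)/(2(p-1))}$; a one-line derivative check shows $(256p)^{1/(p-1)} \le 512$ for all $p \ge 2$ (the maximum is at $p=2$), while in $(32pm)^{-(p-2)/(2(p-1))}$ the exponent $\tfrac{p-2}{2(p-1)} \in [0,\tfrac12)$ bounds the $32$- and $p$-factors below by a constant times $p^{-1/2}$, leaving the desired $m^{-(p-2)/(2(p-1))}$. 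For the constant term $\tfrac{1}{16\lambda} = \tfrac{1}{256p}$, I would verify $p^{-1/2}m^{(p-2)/(2(p-1))}$ is bounded below by a constant: using $p \le m$ this reduces to $p^{-1/(2(p-1))}$, and since $\tfrac{\ln p}{2(p-1)}$ is maximized at $p=2$ one gets $p^{-1/(2(p-1))} \ge 2^{-1/2}$. Hence both terms, and therefore $\alpha_0$, are $\Omega(p^{-1/2}m^{-(p-2)/(2(p-1))})$.

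The main obstacle I anticipate is not a single hard inequality but the combination of two bookkeeping hazards. First, matching the constant in the bound on $k$ \emph{exactly} to $(32pm)^{(p-2)/2}$ depends on the precise choice of $s$ in line \eqref{alg:line:DefineS}, so the arithmetic on the powers of $2$, $p$, $i$, and $m$ must be tracked carefully. Second, producing a \emph{uniform} constant in the $\alpha_0$ bound forces one to confirm that the $m$-free term $\tfrac{1}{256p}$ still dominates the target for small $p$ (e.g.\ $p=2$, where the target is a constant); this is precisely where the hypothesis $p \le m$ and the monotonicity of $p^{-1/(2(p-1))}$ are needed.
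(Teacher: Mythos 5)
Your proof is correct and takes essentially the same route as the paper's: the same two ingredients (the comparison $\norm{\AA\Dtil}_p \le \norm{\AA\Dtil}_2$ for $p \ge 2$ and the padding bound $\RR + s\II \succeq s\II$), combined with the hypothesis $\Dtil^{\top}\AA^{\top}(\RR+s\II)\AA\Dtil \le \lambda i/p^2$ and the algorithm's value of $s$, collapse to exactly $(32pm)^{(p-2)/2}$. Your $\alpha_0$ analysis also mirrors the paper's, differing only in that you verify both terms of the min explicitly with tracked constants, whereas the paper folds the $p \le m$ comparison of the $\frac{1}{256p}$ term against the target into its final step.
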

\begin{proof}
Since, $ s \II \preceq \RR+s\II $,
\[
\norm{\AA\Dtil}_2^2 =\Dtil^{\top} \AA^{\top} \AA \Dtil \leq \frac{1}{ s} \Dtil^{\top} \AA^{\top}(\RR+s\II) \AA \Dtil
\]
and,
\[
\norm{\AA\Dtil}_p^p \leq \norm{\AA\Dtil}_2^p \leq  \frac{1}{ s} \left(\Dtil^{\top} \AA^{\top} \AA \Dtil\right)^{(p-2)/2} \Dtil^{\top} \AA^{\top}(\RR+s\II) \AA \Dtil.
\]
\deeksha{Above not true for $p<2$, can probably be fixed by removing the poly(m) factor from $s$ and having that here instead.\\}
We also have, $\Dtil^{\top}\AA^{\top} \AA\Dtil \leq \frac{2\lambda}{p^2} m^{(p-2)/p}i^{2/p}$. Combining these,
\begin{align*}
\frac{p^p\norm{\AA\Dtil}_p^p}{2p^2\Dtil^{\top}\AA^{\top}(\RR +s\II)\AA \Dtil} & \leq \frac{p^p}{2p^2} \frac{ 2 m^{(p-2)/p}}{i^{(p-2)/p}} \left(\frac{2\lambda}{p^2} m^{(p-2)/p}i^{2/p}\right)^{(p-2)/2}\\
& \leq (4\sqrt{2})^{p-2}p^{(p-2)/2}m^{(p-2)/p} m^{(p-2)^2/2p}\\
& =(32pm)^{(p-2)/2} 
\end{align*}
We can now find a bound on $\alpha_0$.
\begin{align*}
\alpha_0 &\geq \min\left\{\Omega\left(\frac{1}{p}\right), \Omega\left(\frac{1}{p^{1/(p-1)}(pm)^{(p-2)/2(p-1)}}\right)\right\}\\
& \geq \min\left\{\Omega\left(\frac{1}{p}\right), \Omega\left(\frac{1}{p^{1/2}m^{(p-2)/2(p-1)}}\right)\right\}\\
& \geq \Omega\left(\frac{1}{p^{1/2}m^{(p-2)/2(p-1)}}\right) \text{, assuming $p \leq m$.}
\end{align*}
\deeksha{We might have to add this in the assumption , that p is less than m.}
\end{proof}

\Approximation*

\begin{proof}
In the algorithm we choose $\alpha$ such that given $\Dtil$, $\alpha = argmin_\delta \|\AA(\xx - \delta \Dtil) -\bb\|_p^p$. From our assumption, we also know that $\residual(\alpha_0\Dtil) \geq \frac{\alpha_0}{4} i$. Now, since the residual function is a convex function with value zero at the zero vector, we know that $\residual(\Dtil/\lambda) \leq \frac{1}{\lambda}\residual(\Dtil)$ (for our value of $\lambda = 16p$).
\begin{align*}
\residual(\alpha \Dtil) & \geq \lambda \residual(\alpha \Dtil/\lambda) \\
&\geq \|\AA\xx -\bb\|_p^p - \|\AA(\xx - \alpha \Dtil) -\bb\|_p^p  \\
& \geq \|\AA\xx -\bb\|_p^p - \|\AA(\xx - \alpha_0 \Dtil) -\bb\|_p^p \\
& \geq \residual(\alpha_0 \cdot \Dtil) \\
&\geq \frac{\alpha_0}{4} i  \\
&\geq \frac{\alpha_0}{4\lambda} OPT.
\end{align*}
 The last inequality follows form the fact that the optimum of the residual problem is at most $\lambda i$. This is because, Lemma \ref{lem:Invariant} shows that, $(\|\AA\xx^{(t)}-\bb\|_p^p- \|\AA\xx^{\star}-\bb\|_p^p)/16p<i$. Now from Lemma \ref{lem:RelateResidualOpt} we can conclude that the residual problem has optimum at most $\lambda i$. Since we have from our assumption that the objective of \eqref{eq:ProblemSolve} is at most $\lambda i/p^2$, from Lemma~\ref{lem:Boundk}, we can bound the factor
$O(\lambda/\alpha_0) \leq O(p^{1 +
  \frac{p-2}{2(p-1)}}m^{\frac{p-2}{2(p-1)}}) \leq
O(p^{1.5}m^{\frac{p-2}{2(p-1)}})$.

\end{proof}

\subsection{Proof of Lemma \ref{lem:Termination}}
\termination*

\begin{proof}
  We first show the forward implication.
From the assumptions, we have,
\begin{align*}
&\frac{ \norm{\AA\xx^{(t)}-\bb}_p^p -OPT}{16p} \leq i \leq \frac{\epsilon}{16p(1+\epsilon)} \norm{\AA\xx^{(t)}-\bb}_p^p\\
\Rightarrow & \norm{\AA\xx^{(t)}-\bb}_p^p \frac{1}{1+\epsilon} \leq OPT\\
\Rightarrow & \norm{\AA\xx^{(t)}-\bb}_p^p \leq (1+ \epsilon) OPT.
\end{align*}
For the other direction we have,
\[
\frac{\norm{\AA\xx^{(t)}-\bb}_p^p}{1+\epsilon} \leq OPT.
\]
Thus,
\begin{align*}
i &\leq 2\frac{ \norm{\AA\xx^{(t)}-\bb}_p^p -OPT}{16p}  \leq 2\frac{ \norm{\AA\xx^{(t)}-\bb}_p^p }{16p}\left(1 - \frac{1}{1+\epsilon}\right) \leq \frac{2\epsilon}{1+\epsilon}\frac{ \norm{\AA\xx^{(t)}-\bb}_p^p }{16p}.
\end{align*}\end{proof}

\section{Converting $\ell_p$-Laplacian Minimization to Regression Form}
Define the following terms:
\begin{tight_itemize}
\item $n$ denote the number of vertices.
\item $l$ denote the number of labels.
\item $\BB$ denote the edge-vertex adjacency matrix.
\item $\gg$ denote the vector of labels for the $l$ labelled vertices.
\item $\WW$ denote the diagonal matrix with weights of the edges.
\end{tight_itemize}
Set $\AA = \WW^{1/p}\BB$ and $\bb = - \BB[:,n:n+l]\gg$. Now $\norm{\AA\xx-\bb}_p^p$ is equal to the $\ell_p$ laplacian and we can use our IRLS algorithm to find the $\xx$ that minimizes this.


\section{Solving $\ell_2$ Problems under Subspace Constraints}
\subsection{Finding the Initial Solution}
We want to solve:
\begin{align*}
\min_{\xx} &\quad \norm{\AA\xx-\bb}_2^2\\
&\CC\xx = \dd.
\end{align*}
Using Lagrangian duality and noting that strong duality holds, we can write the above as,
\begin{align*}
L(\xx,\vv) = & \min_{\xx} \max_{\vv} \quad(\AA\xx-\bb)^{\top}(\AA\xx-\bb) + \vv^{\top}(\dd-\CC\xx)\\
= & \max_{\vv}\min_{\xx} \quad(\AA\xx-\bb)^{\top}(\AA\xx-\bb) + \vv^{\top}(\dd-\CC\xx).
\end{align*}
We first find $\xx^{\star}$ that minimizes the above objective by setting the gradient with respect to $\xx$ to $0$. We thus have,
\[
\xx^{\star} = (\AA^{\top}\AA)^{-1}\left(\frac{2\AA^{\top}\bb+\CC^{\top}\vv}{2}\right).
\]
Using this value of $\xx$ we arrive at the following dual program.
\[
L(\vv) = \max_{\vv} \quad -\frac{1}{4}\vv^{\top}\CC(\AA^{\top}\AA)^{-1}\CC^{\top}\vv -\bb^{\top}\AA(\AA^{\top}\AA)^{-1}\AA^{\top}\bb - \vv^{\top}\CC(\AA^{\top}\AA)^{-1}\AA^{\top}\bb +\bb^{\top}\bb+\vv^{\top}\dd,
\]
which is optimized at,
\[
\vv^{\star}= 2\left(\CC(\AA^{\top}\AA)^{-1}\CC^{\top}\right)^{-1} \left(\dd - \CC(\AA^{\top}\AA)^{-1}\AA^{\top}\bb\right).
\] 
Strong duality also implies that $L(\xx,\vv^{\star})$ is optimized at $\xx^{\star}$, which gives us,
\[
\xx^{\star} =  (\AA^{\top}\AA)^{-1}\left(\AA^{\top}\bb+\CC^{\top}\left(\CC(\AA^{\top}\AA)^{-1}\CC^{\top}\right)^{-1} \left(\dd - \CC(\AA^{\top}\AA)^{-1}\AA^{\top}\bb\right) \right).
\]

\subsection{Solving \eqref{eq:ProblemSolve}}
At every iteration of the algorithm, we want to solve the following problem,
\begin{align*}
\min_{\Delta}  \quad & \Delta^{\top}\AA^{\top}(\RR + s\II)\AA\Delta\\
& \gg^{\top}\AA\Delta = i/2\\
& \CC\Delta = 0.
\end{align*}
The constraints can be combined and rewritten as, $\CC'\Delta = \dd'$ where,
\[
\CC' = \begin{bmatrix}
	\CC \\
	\gg^{\top}\AA
    \end{bmatrix}, \quad \dd' = \begin{bmatrix}
	0 \\
	i/2
    \end{bmatrix}.
\]
Let $\RR' = \RR+s\II$. We now want to solve, 
\begin{align*}
\min_{\Delta}  \quad & \|{\RR'}^{1/2}\AA\Delta\|_2^2\\
& \CC'\Delta = \dd'.
\end{align*}
Using a procedure similar as in the previous section, we get,
\[
\Dopt = (\AA^{\top}\RR'\AA)^{-1} {\CC'}^{\top}\left(\CC'(\AA^{\top}\RR'\AA)^{-1}{\CC'}^{\top}\right)^{-1}\dd'
\]

\end{appendices}

\end{document}